\documentclass[11pt]{article}

\usepackage[a4paper,margin=27mm]{geometry}
\usepackage{amsmath,amssymb,amsthm,mathtools}
\usepackage{booktabs}
\usepackage{graphicx}
\usepackage{microtype}
\usepackage{placeins}
\usepackage[hidelinks]{hyperref}
\usepackage[nameinlink,noabbrev]{cleveref}
\hypersetup{
  pdftitle={Message Order Selects Opposite Collective Opinions on Laplacian-Cospectral Networks},
  pdfauthor={Ruiwu Niu, Xincheng Shu, and Ying Zhao}
}

\newtheorem{theorem}{Theorem}
\newtheorem{proposition}{Proposition}
\newtheorem{corollary}{Corollary}
\theoremstyle{remark}

\newcommand{\one}{\mathbf 1}

\newcommand{\Rook}{\mathrm R}
\newcommand{\Shri}{\mathrm S}
\newcommand{\Id}{\operatorname{Id}}
\newcommand{\Tr}{\operatorname{Tr}}

\newcommand{\supp}{\operatorname{supp}}
\newcommand{\circprod}{\mathbin{\circ}}

\title{Message Order Selects Opposite Collective Opinions\\
on Laplacian-Cospectral Networks}
\author{%
Ruiwu Niu$^{1}$, Xincheng Shu$^{2,3}$, and Ying Zhao$^{4}$\\[0.5em]
\parbox{0.94\textwidth}{\centering\footnotesize
$^{1}$Department of Data Science and Digital Innovation, Hong Kong Shue Yan University, Hong Kong SAR, China\\
$^{2}$Computational Communication Research Center, Beijing Normal University, Zhuhai 519087, China\\
$^{3}$School of Journalism and Communication, Beijing Normal University, Beijing 100875, China\\
$^{4}$Department of Electrical Engineering, City University of Hong Kong, Kowloon, Hong Kong SAR, China\\[0.3em]
Emails: \texttt{rniu@hksyu.edu} (R. Niu); \texttt{yzhao396-c@my.cityu.edu.hk} (Y. Zhao)}
}
\date{}

\begin{document}
\maketitle

\begin{abstract}
Receiving the same messages in different orders can change an agent's
judgement. We construct a network model in which this local sequence effect
determines the final opinion of an entire network. Each agent has a two-level
quantum-like state with an expressed opinion and an auxiliary context
coordinate that can affect the next update. Two valid update rules of equal
strength leave the neutral state unchanged and add no positive or negative
opinion bias. Nevertheless, reversing their order gives the exact opinions
$+0.01$ and $-0.01$ from the same initial state. We place four ordered pairs so
that their positive and negative contributions sum to zero but occupy different
labelled nodes. After this single input, all cases follow the same nonlinear
network rule, which preserves valid states and admits two stable consensuses.
We compare two 16-node networks whose complete Laplacian spectra are identical.
With every local ingredient fixed, they converge to opposite consensuses. A
separate local calculation identifies a specific overlap between the labelled
pulse and Laplacian eigenspace projectors that sets the leading displacement of
the basin boundary. Rigorous interval arithmetic proves the finite registered
example independently. The construction connects message order to collective
selection and shows why Laplacian eigenvalues alone do not determine which
opinion wins.
\end{abstract}

\section{Introduction}

Arguments, evidence and frames arrive sequentially. Their order can therefore
be part of the interaction rather than a cosmetic feature of presentation.
This issue is especially visible in networks of large language model agents,
where recent experiments report networked consensus and fragmentation,
sensitivity to retained dialogue history, model-dependent persuasion under
different framings, and dependence on the order in which debate answers are
presented
\cite{ChuangEtAl2024,CisnerosVelarde2024,CauEtAl2025,
StengelEskinEtAl2025}.
The present paper does not fit an empirical model to those systems. It develops
a theory that isolates one mechanism through which local message order can be
amplified into a collective decision.

Quantum-like models provide a compact language for sequence-sensitive state
updates. A state records both the current readout and contextual information
that is not directly visible in that readout. Two updates need not commute, so
applying update $A$ and then $B$ can differ from applying $B$ and then $A$.
This operational use of a density matrix organizes preparations, updates and
measurements; it does not require a person or an artificial agent to be a
physical quantum system. Question-order models and quantum instruments already
establish this local principle
\cite{WangBusemeyer2013,WangEtAl2014,OzawaKhrennikov2021,FuyamaEtAl2025}.

Order sensitivity is not specific to quantum-like formalisms. Classical rules
for opinion revision can exhibit primacy and recency, and changing only the
schedule of agent updates can alter collective outcomes
\cite{AllahverdyanGalstyan2014,WeimerEtAl2019}. The present use of a rebit is
therefore a compact operational model of the local state and its updates, not a
claim that order effects require quantum theory.

Network versions of quantum-like opinion dynamics are also established.
Prior work includes quantum walks for idea propagation, quantum-theoretic
opinion models, and consensus in density-operator or other noncommutative
spaces \cite{ZhangBusemeyer2021,SansEtAl2026,ShiEtAl2016,
Jafarizadeh2016,SepulchreEtAl2010}. Most directly, Chu assigns density matrices
to network nodes, represents order effects with noncommuting question
operators, couples nodes through Lindblad dynamics, and recovers a
Friedkin--Johnsen model under a product-state reduction
\cite{FriedkinJohnsen1990,Chu2026}. Our novelty claim is therefore not based on
density states, noncommutativity, order effects or Laplacian coupling by
themselves.

Operator order has also been incorporated into social network group decisions,
and quantum-like state evolution has been used to model repeated influence
among several respondents \cite{ShenEtAl2024,BroekaertEtAl2026}. These studies
further narrow the present contribution to the particular construction that
links a finite ordered channel pair to a balanced network pulse and then to a
nonlinear choice between attractors.

Recent studies have also placed two-level or qubit-like cognitive states on
interacting-agent networks, using Hamiltonian coupling, quantum-inspired
polarization mechanisms, quantum-circuit simulations and Bloch-sphere
collective dynamics
\cite{AlodjantsEtAl2024,AlodjantsEtAl2025,MaksymovPogrebna2024,GuoEtAl2025,
SansEtAl2026PLA}. These works establish quantum-like network opinion states and
collective phases as prior art. Their primary mechanisms differ from the
construction here, in which channel order generates a balanced pulse.

The distinct problem addressed here begins after the local order effect is
created. We study how a fixed ordered pair writes a balanced, labelled pulse into a
nonlinear network and how that input selects one of two stable collective
opinions. Classical network models already show that local influence and
topology can create agreement, disagreement and multistability
\cite{DevriendtLambiotte2021,HomsDonesEtAl2021,BizyaevaEtAl2023,
ParkEtAl2022}. It is also known that an eigenvalue list does not determine a
graph. Matched spectra can coexist with different synchronization transients or
convergence rates
\cite{vanDamHaemers2003,RavooriEtAl2011,FishSun2017,NazerianEtAl2024,
MisiakDziubinski2026}. Separately, network structure can change nonlinear
phase-locked states and their basins \cite{SokolovErmentrout2019}. These
precedents make the relevant target narrower. The contribution is the verified
composition of the ingredients: an exact order-sensitive channel pair creates
a balanced labelled pulse, the same nonlinear flow sends that pulse to
opposite attractors on a fixed Laplacian-cospectral pair, and a specific
projector overlap determines the leading local displacement of the basin
boundary.

For one explicit model, the paper makes three contributions.
\begin{enumerate}
\item We construct two finite quantum-like update channels of equal strength.
They contain no additive opinion bias, yet their two orders produce the exact
opinions $+0.01$ and $-0.01$ from the same preparation. Restriction to the
population sector and a parity-twirl control each remove this contrast and
identify the contextual pathway within the declared model.
\item We place four ordered pairs on a network so that the injected opinions
sum to zero, then evolve the full node states with one autonomous nonlinear
master equation. Expressed opinion is an exact coordinate of this flow rather
than an appended approximation. The typed channels act only during the atomic
input step; the subsequent sender feedback is classical and common to all
graphs and order arms.
\item We hold the channels, preparations, labels, input roles and network rule
fixed on the rook and Shrikhande graphs. Although these graphs have the same
complete Laplacian spectrum $\{0^{(1)},4^{(6)},8^{(9)}\}$, they reach opposite
consensus states. A local calculation of the stable manifold identifies a
specific overlap of complete Laplacian eigenspace projectors in the leading
basin response. A separate interval proof, computed with outward rounding,
certifies the winner table at finite amplitude.
\end{enumerate}

The proof has a simple dependency structure. The local channel theorem creates
the balanced pulse. The network theorem shows that this pulse enters a valid
dynamics of the complete state with an exact opinion coordinate. The theorem
on opposite winners then lifts a rigorous finite-amplitude opinion certificate
back to the complete node states. Separately, the projector calculation
explains the topology dependence within the small-amplitude regime for which it
is proved.

\Cref{fig:mechanism} gives the whole argument in three stages. Its central
bridge is that a zero-sum input need not vanish: the signs can still form a
nonzero pattern across labelled nodes, leaving information for the network
topology to act on.

\begin{figure}[t!]
\centering
\IfFileExists{figures/fig1_mechanism.pdf}{%
  \includegraphics[width=\linewidth]{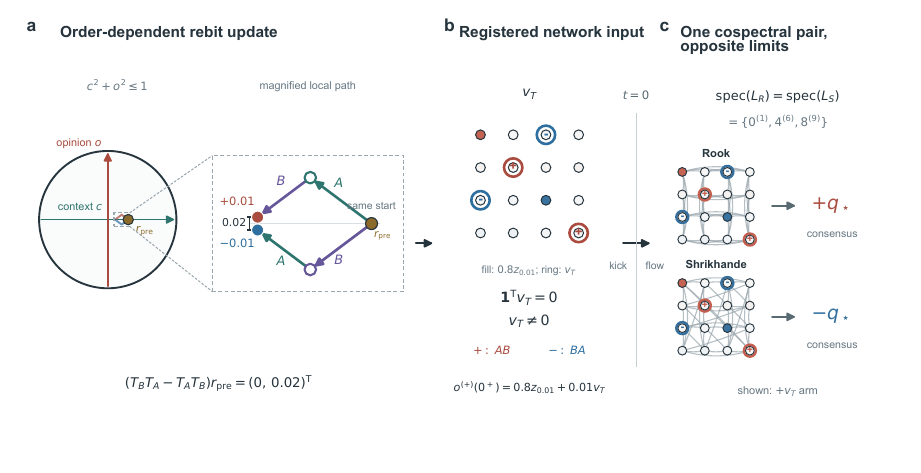}%
}{%
  \fbox{\parbox[c][0.25\textheight][c]{0.94\linewidth}{\centering
  Placeholder for the local-to-network mechanism figure.\\[0.5em]
  Rebit state $\to$ ordered $A/B$ updates $\to$ balanced pulse
  $\to$ common network flow.}}
}
\caption{\textbf{A local order effect becomes a topology-dependent collective
outcome.} Panel a shows the rebit Bloch disk $c^2+o^2\leq1$, where
$c,o\in[-1,1]$ denote auxiliary context and expressed opinion. Both paths start
from the same valid vector $r_{\rm pre}=(25/84,0)^{\mathsf T}\in\mathbb R^2$.
The left dashed box selects the path region, and the dashed frame on the right
encloses an enlargement with equal display scales for $c$ and $o$.
The matrices $T_A,T_B\in\mathbb R^{2\times2}$ are the Bloch-plane actions of
the two channels of equal strength. The upper path applies $A$ first and $B$
second, while the lower path reverses the order. Open circles mark intermediate
states; filled endpoints have opinions $+0.01$ and $-0.01$. Their exact
difference is the displayed commutator action. Panel b shows the complete
16-node input. Node interiors encode the common nonuniform background
$0.8z_{0.01}$, where $z_{0.01}\in[-1,1]^{16}$ is a fixed reference profile.
The outer rings encode the order pulse
$v_T\in\{-1,0,1\}^{16}$: red rings receive $AB$ and blue rings receive $BA$.
For the vector of ones $\mathbf 1\in\mathbb R^{16}$,
$\mathbf 1^{\mathsf T}v_T=0$ although $v_T\ne0$. Thus the pulse has zero total
but retains its node locations. With the superscript $+$ selecting the shown
$+v_T$ assignment, the opinion vector after the kick is
$o^{(+)}(0^+)=0.8z_{0.01}+0.01v_T\in(-1,1)^{16}$. The ordered updates stop at
$t=0$, after which only the common network flow acts. Panel c applies this same
labelled input and the same parameters to the rook and Shrikhande networks.
Their Laplacian matrices $L_R,L_S\in\mathbb R^{16\times16}$ have identical
complete spectra, yet the shown $+v_T$ arm converges to $+q_\star$ on the rook
network and to $-q_\star$ on the Shrikhande network, where
$q_\star\in(0,1)$ is the consensus magnitude. Reversing all four ring
assignments reverses both signs. \Cref{fig:witness} separately displays the
small-amplitude projector mechanism, the finite-amplitude interval certificate
and the parity-twirled control.}
\label{fig:mechanism}
\end{figure}
\FloatBarrier

Panel b connects the local update to the network flow used below. Scalar
balance removes net bias but preserves the locations of the positive and
negative entries. At $t=0$, the protocol switches from the single channel
update to a common flow that preserves state validity and alone governs
interactions along network edges.

\section{Operational rebit and an exact order pulse}
\label{sec:source}

\subsection{State and finite update channels}

The local state needs to retain more than the opinion that is read out at one
instant. We therefore use one visible coordinate for opinion and one auxiliary
coordinate for contextual information that a later message can convert into
opinion. The smallest real quantum-like representation with these two
coordinates is a rebit, namely a two-level density matrix restricted to the
real Bloch plane.

Let $V$ be a finite node set, let $N=|V|$, and use $i,j\in V$ for node
indices. Let $\mathbb M_2=\mathbb C^{2\times2}$ denote the algebra of complex
$2\times2$ matrices. Write $I_2\in\mathbb R^{2\times2}$ for the identity matrix,
$\sigma_x=\left(\begin{smallmatrix}0&1\\1&0\end{smallmatrix}\right)$ and
$\sigma_z=\left(\begin{smallmatrix}1&0\\0&-1\end{smallmatrix}\right)$ for the
two real Pauli matrices, and $\Tr$ for the matrix trace. The allowed rebit
states form the following subset of density matrices:
\[
\mathcal D_{\rm R}=\{\rho\in\mathbb R^{2\times2}:\rho=\rho^\top,
\ \rho\succeq0,\ \Tr\rho=1\}\subset\mathbb M_2.
\]
At node $i\in V$, let $\rho_i\in\mathcal D_{\rm R}$ be parameterized by
$c_i,o_i\in\mathbb R$ and write
\begin{equation}
\begin{aligned}
\rho_i&=\frac12\left(I_2+c_i\sigma_x+o_i\sigma_z\right),
&r_i&=(c_i,o_i)^\top,\\
r_i&\in\mathbb B_2:=\{r\in\mathbb R^2:\|r\|_2\leq1\},
&\|r_i\|_2&\leq1.
\end{aligned}
\label{eq:rebit}
\end{equation}
The inequality is exactly the condition that $\rho_i$ is a valid state. The
Bloch disk in \cref{fig:mechanism}a is simply the set of all allowed pairs
$(c_i,o_i)$. The readout $o_i=\Tr(\sigma_z\rho_i)$ is expressed opinion, while
$c_i=\Tr(\sigma_x\rho_i)$ is the auxiliary context coordinate. The latter is
not reported as opinion, but it is causally active in the ordered updates.

The formal message types $A$ and $B$ act along two different axes of this
disk. A nonselective dephasing channel preserves the component parallel to its
axis and removes the perpendicular component. We use a finite partial version
that retains half of the perpendicular component, so each update is nontrivial
but not a complete projection.
Choose two unit axes $n_A,n_B\in S^1:=\{n\in\mathbb R^2:\|n\|_2=1\}$
in the real Bloch plane,
\[
n_A=\left(\frac35,\frac45\right)^\top,
\qquad
n_B=\left(\frac35,-\frac45\right)^\top,
\]
and, for a message type $\ell\in\{A,B\}$, define the rank-one measurement
projectors
\[
\Pi_\ell^\pm
=\frac12\left[I_2\pm n_\ell\mathbin{\cdot}(\sigma_x,\sigma_z)\right],
\qquad \ell\in\{A,B\}.
\]
For $n=(n_1,n_2)^\top\in\mathbb R^2$, the matrix-valued dot product means
$n\mathbin{\cdot}(\sigma_x,\sigma_z)=n_1\sigma_x+n_2\sigma_z$.
The associated nonselective dephasing channel
$\mathcal D_\ell:\mathbb M_2\to\mathbb M_2$ is
\[
\mathcal D_\ell(\rho)
=\Pi_\ell^+\rho\Pi_\ell^+
+\Pi_\ell^-\rho\Pi_\ell^-.
\]
It preserves $\mathcal D_{\rm R}$. Writing $\Id$ for the identity superoperator
on $\mathbb M_2$, we use
the finite partial-dephasing channel
\begin{equation}
\mathcal E_\ell=\frac12\Id+\frac12\mathcal D_\ell.
\label{eq:partial-dephasing}
\end{equation}
It is completely positive and trace preserving, abbreviated CPTP, so it maps
every valid input state to a valid output state. It is also unital, meaning
that it leaves the neutral state $I_2/2$ unchanged and therefore adds no affine
opinion bias. Both channels have the same finite strength and preserve the
real Bloch plane.

Writing $P_\ell=n_\ell n_\ell^\top\in\mathbb R^{2\times2}$, the Bloch action is
$r\mapsto T_\ell r$ with $T_\ell=(I_2+P_\ell)/2$. Explicitly,
\begin{equation}
T_A=\begin{pmatrix}
17/25&6/25\\[2pt]
6/25&41/50
\end{pmatrix},
\qquad
T_B=\begin{pmatrix}
17/25&-6/25\\[2pt]
-6/25&41/50
\end{pmatrix}.
\label{eq:bloch-maps}
\end{equation}

\begin{theorem}[Exact order pulse mediated by context]
\label{thm:local-source}
The channels $\mathcal E_A$ and $\mathcal E_B$ are finite CPTP maps that
preserve rebit states and satisfy
\begin{equation}
T_BT_A-T_AT_B
=\frac{42}{625}
\begin{pmatrix}0&-1\\[2pt]1&0\end{pmatrix}.
\label{eq:commutator}
\end{equation}
From the common preparation $r_{\mathrm{pre}}\in\mathbb B_2$,
\[
r_{\mathrm{pre}}=\left(\frac{25}{84},0\right)^\top,
\]
the order $AB$, meaning first $A$ and then $B$, and the reverse order give
\begin{equation}
T_BT_A r_{\mathrm{pre}}
=\left(\frac{253}{2100},+\frac1{100}\right)^\top,
\qquad
T_AT_B r_{\mathrm{pre}}
=\left(\frac{253}{2100},-\frac1{100}\right)^\top.
\label{eq:exact-source}
\end{equation}
Both outputs are valid rebit states and have equal contextual endpoints.
\end{theorem}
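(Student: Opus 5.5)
The proof is a finite verification in three parts: the channel properties, the Bloch-plane algebra, and the validity of the outputs.

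\emph{Channel properties.} For complete positivity and trace preservation I will write $\mathcal E_\ell$ in Kraus form with operators $\tfrac1{\sqrt2}I_2$, $\tfrac1{\sqrt2}\Pi_\ell^+$ and $\tfrac1{\sqrt2}\Pi_\ell^-$. Since $\Pi_\ell^\pm$ are complementary orthogonal projectors, $\Pi_\ell^++\Pi_\ell^-=I_2$, so the Kraus operators satisfy $\tfrac12 I_2+\tfrac12\bigl[(\Pi_\ell^+)^2+(\Pi_\ell^-)^2\bigr]=I_2$. This gives CPTP, and the same identity, applied with the operators on the other side, gives unitality. The Kraus operators are real symmetric, so real symmetric inputs stay real symmetric, and CPTP preserves positivity and unit trace. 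Hence $\mathcal D_{\rm R}$ is preserved.

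To justify the Bloch action, I decompose $r=(n\cdot r)n+r_\perp$. The map $\mathcal D_\ell$ keeps $I_2$ and $n_\ell\cdot\sigma$ and kills $m\cdot\sigma$ for $m\perp n_\ell$, because $\Pi^+ (m\cdot\sigma)\Pi^+=0$ by anticommutation. So $\mathcal D_\ell$ acts on the Bloch vector by $P_\ell$, and $\mathcal E_\ell$ acts by $T_\ell=(I_2+P_\ell)/2$. Substituting the given axes reproduces \eqref{eq:bloch-maps}.

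\emph{Commutator.} Rather than multiplying full matrices, I note that $T_BT_A-T_AT_B=\tfrac14[P_B,P_A]$, since the identity parts commute. For rank-one projectors,
\[
[P_B,P_A]=(n_B\cdot n_A)\bigl(n_Bn_A^\top-n_An_B^\top\bigr).
\]
Here $n_B\cdot n_A=9/25-16/25=-7/25$. The antisymmetric matrix $n_Bn_A^\top-n_An_B^\top$ equals $\det(n_B,n_A)\,J$ with $J=\left(\begin{smallmatrix}0&-1\\1&0\end{smallmatrix}\right)$, and $\det(n_B,n_A)=\tfrac35\cdot\tfrac45-\bigl(-\tfrac45\bigr)\tfrac35=24/25$. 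This yields
\[
\tfrac14\cdot\bigl(-\tfrac7{25}\bigr)\cdot\tfrac{24}{25}\,J=-\tfrac{42}{625}\,J.
\]
The sign convention for the determinant has to be checked carefully, since the claimed result is $+\tfrac{42}{625}J$. I will settle the sign by comparing a single entry of the product of the explicit matrices in \eqref{eq:bloch-maps}. This sign bookkeeping is the only real obstacle.

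\emph{Outputs and validity.} I compute $T_Ar_{\rm pre}=\tfrac{25}{84}(17/25,6/25)^\top=(17/84,6/84)^\top$ and then apply $T_B$. The first coordinate is $\tfrac1{84}\bigl(\tfrac{289}{25}-\tfrac{36}{25}\bigr)=\tfrac{253}{2100}$. The second is $\tfrac1{84}\bigl(-\tfrac{102}{25}+\tfrac{246}{50}\bigr)=\tfrac{1}{84}\cdot\tfrac{42}{50}=\tfrac1{100}$.

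For the reverse order I use the reflection $R=\mathrm{diag}(1,-1)$. It satisfies $RT_AR=T_B$ and $Rr_{\rm pre}=r_{\rm pre}$, so $T_AT_Br_{\rm pre}=R\,T_BT_Ar_{\rm pre}$. This gives equal context coordinates and opposite opinions. The difference of the two outputs is also consistent with the commutator applied to $r_{\rm pre}$: it equals $\tfrac{42}{625}\cdot\tfrac{25}{84}\,(0,1)^\top=(0,\tfrac{1}{50})^\top$.

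Validity is automatic because the $T_\ell$ map $\mathbb B_2$ into itself. As an explicit check, $(253/2100)^2+(1/100)^2<1$.
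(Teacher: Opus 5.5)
Your verification is correct and follows essentially the paper's route. The paper, with its Supplement, uses the same Kraus operators $I_2/\sqrt2,\ \Pi_\ell^\pm/\sqrt2$, the Bloch action $T_\ell=(I_2+P_\ell)/2$, exact evaluation on $r_{\rm pre}$, and the bound $1289/88200<1$. Your two shortcuts, $T_BT_A-T_AT_B=\tfrac14[P_B,P_A]$ and $T_AT_Br_{\rm pre}=R\,T_BT_Ar_{\rm pre}$ with $R=\operatorname{diag}(1,-1)$, are both valid. The reflection also explains structurally why the contextual endpoints agree, where the paper simply multiplies out both products.

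The sign you single out as the main obstacle comes from a slip in your antisymmetric identity. For $a,b\in\mathbb R^2$ and your $J=\left(\begin{smallmatrix}0&-1\\1&0\end{smallmatrix}\right)$, one has $ab^\top-ba^\top=(a_1b_2-a_2b_1)\left(\begin{smallmatrix}0&1\\-1&0\end{smallmatrix}\right)=-\det(a,b)\,J$, not $+\det(a,b)\,J$. Hence $n_Bn_A^\top-n_An_B^\top=-\tfrac{24}{25}J$, and $\tfrac14[P_B,P_A]=\tfrac14\bigl(-\tfrac7{25}\bigr)\bigl(-\tfrac{24}{25}\bigr)J=+\tfrac{42}{625}J$, exactly as claimed.

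Your own output computation already settles the sign without any extra entry comparison. $[T_B,T_A]$ is a commutator of symmetric matrices, so it is antisymmetric and equals $cJ$ for some scalar $c$. Then $cJr_{\rm pre}=(0,\tfrac{25c}{84})^\top$ must equal the difference $(0,\tfrac1{50})^\top$ of the two outputs you computed directly and independently of the commutator, which forces $c=\tfrac{42}{625}$.
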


\begin{proof}
Equation~\eqref{eq:partial-dephasing} is a convex mixture of CPTP maps, so it
is CPTP. Direct multiplication of the rational matrices in
\eqref{eq:bloch-maps} gives \eqref{eq:commutator} and
\eqref{eq:exact-source}. The preparation has norm $25/84<1$, while either
output has squared norm $1289/88200<1$.
\end{proof}

\paragraph{Mechanism controls.}
For the population restriction
$\Delta_Z:\mathbb R^2\to\mathbb R^2$, $\Delta_Z(c,o)=(0,o)$, and the opinion
unit vector $e_o=(0,1)^\top$,
\[
\Delta_ZT_A\Delta_Z=\Delta_ZT_B\Delta_Z,
\qquad
e_o^\top T_Ae_o=e_o^\top T_Be_o=\frac{41}{50}.
\]
Since $\Delta_Zr_{\mathrm{pre}}=0$, both restricted orders give zero opinion.
This restriction is applied before and throughout the update sequence.
Projecting only after the first full update is a different operation because
the first channel can transfer contextual information into opinion.

For the parity superoperator
$\mathcal Z:\mathbb M_2\to\mathbb M_2$,
$\mathcal Z(\rho)=\sigma_z\rho\sigma_z$, define
\[
\overline{\mathcal E}_\ell
=\frac12\left(\mathcal E_\ell+
\mathcal Z\circ\mathcal E_\ell\circ\mathcal Z\right).
\]
The two twirled Bloch maps coincide:
\begin{equation}
\overline T_A=\overline T_B
=\operatorname{diag}\left(\frac{17}{25},\frac{41}{50}\right).
\label{eq:twirl}
\end{equation}
The unchanged preparation has zero opinion, so both twirled orders have zero
opinion and zero order contrast. This establishes that the nonzero contrast
in \eqref{eq:exact-source} is mediated by the context coordinate relative
to the declared preparation and readout basis.

Both paths start from zero expressed opinion, contain exactly one update of
each type and end with the same context coordinate. The controls identify
transfer from the context coordinate to opinion as the source of the sign
reversal for this preparation and readout. This mechanism statement applies to
the declared operational model. Comparison with general classical latent-state
models lies outside its scope.

\subsection{Crossed injection and the atomic protocol boundary}

The local order effect must next be written into a network while keeping the
two message orders equally represented. We use four active nodes. Two receive
$AB$ and two receive $BA$, with the assignments swapped in a second arm. The
resulting signed opinion pulse sums to zero.

For the explicit witness, specialize to the labelled vertex set
$V=\mathbb Z_4^2$ and identify $\mathbb R^V\cong\mathbb R^{16}$ with a
$4\times4$ array whose rows and columns carry the vertex labels. Define three
fixed vectors $z_0,p,v_T\in\mathbb R^{16}$ by
\begin{equation}
z_0=
\begin{pmatrix}
1&0&0&0\\0&0&0&0\\0&0&-1&0\\0&0&0&0
\end{pmatrix},
\quad
p=
\begin{pmatrix}
0&1&0&2\\3&0&4&5\\0&-2&0&-1\\-4&-5&-3&0
\end{pmatrix},
\quad
v_T=
\begin{pmatrix}
0&0&-1&0\\0&1&0&0\\-1&0&0&0\\0&0&0&1
\end{pmatrix},
\label{eq:p-vector}
\end{equation}
and, for a real deformation parameter $\tau\in\mathbb R$, set
\begin{equation}
z_\tau=z_0+\tau p.
\label{eq:center-source}
\end{equation}
Here $z_\tau\in\mathbb R^{16}$ is a zero-sum background opinion pattern, and
$v_T$ marks the four nodes that receive ordered message pairs. Both arrays are
balanced, they are orthogonal, and $z_\tau$ vanishes on the set of active nodes
$\supp(v_T):=\{i\in V:v_{T,i}\neq0\}$.
The coefficients in $p$ define an engineered nonconjugate witness; they are
not estimated from data. Let $\varrho\in[0,1]$ denote the amplitude of the
background state and let $\mu\geq0$ denote the pulse amplitude. For the
registered witness, freeze
\begin{equation}
\varrho=\frac45,
\qquad
\tau=\frac1{100},
\qquad
\mu=\frac1{100}.
\label{eq:registered-source}
\end{equation}
A preparation at an inactive node is admissible whenever
$\varrho\|z_\tau\|_\infty\leq1$. At the registered point,
$\|z_{1/100}\|_\infty=1$, so this product is $4/5<1$.

In arm $\eta\in\{-1,+1\}$, apply $AB$ at active node $i$ when
$\eta v_{T,i}=+1$, and apply $BA$ when $\eta v_{T,i}=-1$. Each arm therefore
contains two blocks of each order, and every active node receives the same
message multiset $\{A,B\}$.

\begin{center}
\fbox{\begin{minipage}{0.91\linewidth}
\textbf{Atomic kick followed by network flow.}
\textbf{1. Prepare.} Both graphs and both arms use the same node states at
$0^-$. Active nodes are in $r_{\mathrm{pre}}$; inactive nodes have Bloch vector
$(0,\varrho z_{\tau,i})^\top$.\par
\textbf{2. Kick.} The four blocks, each containing two messages, are executed
synchronously as one
indivisible map $K_\eta:\mathcal D_{\rm R}^{16}\to\mathcal D_{\rm R}^{16}$ on
the full network state
$\mathbf R=(\rho_1,\ldots,\rho_{16})\in\mathcal D_{\rm R}^{16}$.
No network evolution, readout, adaptation or
additional message occurs during the kick.\par
\textbf{3. Flow.} The symbols $0^-$ and $0^+$ denote the instants immediately
before and after the kick. For continuous time $t\in(0,\infty)$, every graph and arm follows
the same autonomous network rule introduced in \cref{sec:flow}.
\end{minipage}}
\end{center}

By \cref{thm:local-source}, the complete state immediately after the kick is
\begin{equation}
c_i(0^+)=
\begin{cases}
253/2100,&i\in\supp(v_T),\\
0,&i\notin\supp(v_T),
\end{cases}
\qquad
o^{(\eta)}(0^+)=\varrho z_\tau+\eta\mu v_T,
\label{eq:postkick}
\end{equation}
where $o^{(\eta)}=(o_i^{(\eta)})_{i\in V}\in[-1,1]^{16}$ is the opinion vector
after the kick in arm $\eta$.
The opinion perturbation $\eta\mu v_T$ is balanced. The nonzero contextual
component is identical in both arms and therefore does not encode the arm
sign. Channel order is the only operation that depends on the arm. The message types,
their strengths, the node labels, the preparation profile and the multiset at
each active node remain fixed.

\section{A valid nonlinear network flow for the full state}
\label{sec:flow}

\subsection{Replacement master equation}

After the atomic kick, the model needs a network rule that accepts both output
coordinates, keeps every node state valid and supports more than one collective
outcome. The rule below has a simple interpretation in terms of senders and
receivers.
A sender with opinion $o_j$ emits a positive or negative stance $s$. The
receiver is pulled toward the corresponding stance state $\Omega_s$, while a
separate relaxation term pulls it toward the neutral state $\Omega_0$. The
parameter $\beta$ controls how strongly a sender's existing opinion is
amplified when generating its stance.

For a stance sign $s\in\{-1,+1\}$, define the corresponding target state
$\Omega_s\in\mathcal D_{\rm R}$ and the neutral target $\Omega_0\in\mathcal D_{\rm R}$ by
\[
\Omega_s=\frac12(I_2+s\sigma_z),
\qquad
\Omega_0=\frac {I_2}2.
\]
For a parameter $\beta\in[1,\infty)$ that controls stance sharpening, define the conditional
probability $q_\beta:\{-1,+1\}\times[-1,1]\to[0,1]$ and its mean stance
$g_\beta:[-1,1]\to[-1,1]$ by
\begin{equation}
q_\beta(s\mid o)
=\frac{(1+so)^\beta}{(1+o)^\beta+(1-o)^\beta},
\qquad
g_\beta(o)=\sum_{s=\pm1}s q_\beta(s\mid o).
\label{eq:sender-policy}
\end{equation}
Equivalently,
$g_\beta(o)=\tanh[\beta\operatorname{artanh}(o)]$, with the endpoint values
defined continuously. When $\beta=1$, the mean transmitted stance equals the
sender's opinion. Values $\beta>1$ sharpen an existing positive or negative
opinion without choosing a preferred sign.

For $s\in\{-1,0,+1\}$ and a matrix $X\in\mathbb M_2$, let the linear map
$\mathcal L_s:\mathbb M_2\to\mathbb M_2$ be
\[
\mathcal L_s(X)=\Tr(X)\Omega_s-X.
\]
Each $\mathcal L_s$ is a replacement generator of
Gorini--Kossakowski--Lindblad--Sudarshan (GKLS) form
\cite{GoriniEtAl1976,Lindblad1976}.
For a nonnegative symmetric weight matrix
$W=(w_{ij})\in\mathbb R_{\geq0}^{N\times N}$, let $w_{ij}$ be the influence
weight from node $j$ to node $i$ and write the weighted degree as
$d_i=\sum_{j\in V}w_{ij}\in[0,\infty)$. Let
$G=(V,W)$ denote the weighted undirected network encoded by $W$. Let
$\alpha,\kappa\in[0,\infty)$ be, respectively, the rate of influence from
neighbors and the rate of relaxation to the neutral state. For time
$t\in[0,\infty)$, define
\begin{equation}
\boxed{
\dot\rho_i
=\alpha\sum_{j\in V}w_{ij}\sum_{s=\pm1}
q_\beta(s\mid o_j)\mathcal L_s(\rho_i)
+\kappa\mathcal L_0(\rho_i).
}
\label{eq:density-flow}
\end{equation}
The influence rate $\alpha$ scales neighbor input, and $\kappa$ scales return
to the neutral state. The global equation is nonlinear because the sending
probabilities depend on the current opinions. At any fixed network state,
however, a receiver is updated by a nonnegative mixture of valid replacement
generators.

\begin{theorem}[Preservation of density states and an exact opinion coordinate]
\label{thm:network-flow}
Suppose $w_{ij}\geq0$, $W=W^\top$, $\alpha,\kappa\geq0$ and $\beta\geq1$.
For every collection $(\rho_1,\ldots,\rho_N)$ of rebit density matrices in
the Cartesian state space, \eqref{eq:density-flow} has a unique global forward
solution. It preserves positivity, unit trace, Hermiticity and the real Bloch
plane at every node. The Bloch coordinates satisfy exactly
\begin{align}
\dot c_i&=-(\alpha d_i+\kappa)c_i,
\label{eq:context-flow}\\
\dot o_i&=\alpha\left[-d_i o_i+
\sum_jw_{ij}g_\beta(o_j)\right]-\kappa o_i.
\label{eq:opinion-flow}
\end{align}
If $\Phi_G^t:\mathcal D_{\rm R}^N\to\mathcal D_{\rm R}^N$ is the flow on
density states, $\phi_G^t:[-1,1]^N\to[-1,1]^N$ is the flow of
\eqref{eq:opinion-flow}, and
$\pi_o:\mathcal D_{\rm R}^N\to[-1,1]^N$ is the opinion projection
\[
\pi_o(\rho_1,\ldots,\rho_N)
=\bigl(\Tr(\sigma_z\rho_i)\bigr)_{i\in V},
\]
then these two flows, both generated by the weight matrix $W$ of $G$, are
linked by the exact opinion semiconjugacy
\begin{equation}
\pi_o\circ\Phi_G^t=\phi_G^t\circ\pi_o,
\qquad
c_i(t)=e^{-(\alpha d_i+\kappa)t}c_i(0).
\label{eq:semiconjugacy}
\end{equation}
\end{theorem}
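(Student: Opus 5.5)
Since $\mathcal L_s(X)=\Tr(X)\Omega_s-X$ is linear, the plan is to first derive the Bloch-coordinate equations on the whole space of symmetric real matrices and only then deal with existence and positivity. Write $\rho_i=\frac12(I_2+c_i\sigma_x+o_i\sigma_z)$ with $\Tr\rho_i=1$. Then $\mathcal L_s(\rho_i)=\frac12\bigl(s\sigma_z-c_i\sigma_x-o_i\sigma_z\bigr)$ for $s=\pm1$, and $\mathcal L_0(\rho_i)=-\frac12(c_i\sigma_x+o_i\sigma_z)$. Summing over $s$ with weights $q_\beta(s\mid o_j)$, which add to one, and over $j$ with weights $w_{ij}$ (summing to $d_i$), the right-hand side of \eqref{eq:density-flow} has zero trace, has no $\sigma_y$ or imaginary part, and has $\sigma_x$ and $\sigma_z$ coefficients that reproduce \eqref{eq:context-flow} and \eqref{eq:opinion-flow}. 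The identity $\sum_s s\,q_\beta(s\mid o_j)=g_\beta(o_j)$ from \eqref{eq:sender-policy} supplies the nonlinear term. Hence trace, Hermiticity and the real Bloch plane are invariant: the vector field is tangent to the affine subspace $\{\Tr=1,\ X=X^\top\in\mathbb R^{2\times2}\}$. On that subspace the matrix ODE is equivalent to the ODE system for $(c,o)\in\mathbb R^{2N}$.

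For existence and uniqueness, I will note that $g_\beta=\tanh(\beta\operatorname{artanh}o)$ is locally Lipschitz on the open interval $(-1,1)$. At $o=\pm1$ with $\beta\geq1$, the formula $(1+o)^\beta/((1+o)^\beta+(1-o)^\beta)$ is $C^1$ on $[-1,1]$, since the denominator is at least $2^{1-\beta}\cdot 2^{\beta}/2>0$ and $(1\pm o)^\beta$ is $C^1$ for $\beta\geq1$. So $g_\beta$ is Lipschitz on $[-1,1]$. I then extend it to $\mathbb R$ by constant values, which gives a globally Lipschitz vector field and a unique global solution of the extended system. Once the box is shown to be invariant, this solution is the solution of \eqref{eq:opinion-flow} itself. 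Equation \eqref{eq:context-flow} is linear and decoupled, so the formula for $c_i(t)$ in \eqref{eq:semiconjugacy} follows at once.

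\textbf{Main step: positivity, i.e.\ invariance of the Bloch disk $c_i^2+o_i^2\leq1$.} I will not rely on the GKLS structure directly, because the rates depend on the state; instead I argue with the disk. Set $\lambda_i=\alpha d_i+\kappa$ and $m_i=\alpha\sum_j w_{ij}g_\beta(o_j)$, so that $\dot r_i=-\lambda_i r_i+m_i e_o$. Because $|g_\beta|\leq1$, we have $|m_i|\leq\alpha d_i\leq\lambda_i$. Then
\[
\tfrac{d}{dt}\tfrac12\|r_i\|^2=-\lambda_i\|r_i\|^2+m_io_i\leq-\lambda_i\|r_i\|^2+\lambda_i\|r_i\|,
\]
which is $\leq0$ whenever $\|r_i\|\geq1$. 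A standard comparison (Nagumo-type) argument, applied to the extended globally Lipschitz system, then shows that the product of closed disks is forward invariant. Two consequences follow: $o_i\in[-1,1]$, so the extension of $g_\beta$ is never used, and $\rho_i\succeq0$. The case $\lambda_i=0$ is trivial, since then $\dot r_i=0$.

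Finally, the semiconjugacy in \eqref{eq:semiconjugacy} holds because the $o$-subsystem \eqref{eq:opinion-flow} is closed: it does not involve $c$. By uniqueness, the $o$-component of $\Phi_G^t(\mathbf R)$ therefore coincides with $\phi_G^t(\pi_o\mathbf R)$. Invariance of the box $[-1,1]^N$ under $\phi_G^t$ follows from the same disk estimate with $c=0$. I expect the disk-invariance step to be the only delicate point, specifically handling the boundary and the endpoints $o=\pm1$; the rest is direct computation.
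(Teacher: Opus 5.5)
Your proof is correct, but it reaches positivity by a different route from the paper. The paper handles that step at the matrix level. It writes each node's generator as $\gamma_i[\Theta_i(\boldsymbol\rho)-\rho_i]$, with $\gamma_i=\alpha d_i+\kappa$ and $\Theta_i$ a density matrix (a convex mixture of $\Omega_{\pm1}$ and $\Omega_0$). Variation of constants then expresses $\rho_i(t)$ as a convex combination of $\rho_i(0)$ and the targets $\Theta_i(\boldsymbol\rho(s))$. A first-exit argument, applied to an unspecified Lipschitz extension of $q_\beta$, closes the proof. Only afterwards does the paper read off the Bloch equations by taking $\sigma_x$ and $\sigma_z$ traces.

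You instead derive the Bloch equations first and prove invariance of the disk through the differential inequality $\tfrac{d}{dt}\tfrac12\|r_i\|^2\leq\lambda_i\|r_i\|(1-\|r_i\|)$. The two arguments are the integral and differential forms of the same fact: the Bloch vector $(0,m_i/\lambda_i)$ of the target $\Theta_i$ lies in the disk.

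Your route has two advantages. Extending $g_\beta$ by constants keeps $|g_\beta|\leq1$ on all of $\mathbb R$, so your estimate holds for the extended system globally and each node's disk is invariant on its own, with no bootstrap over the whole network. You also prove explicitly that $\phi_G^t$ preserves $[-1,1]^N$, which the paper uses without proof. The paper's convexity argument, in turn, is coordinate-free. It would carry over unchanged to higher-dimensional state spaces, where positivity is no longer equivalent to a Euclidean norm bound on a Bloch vector.
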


\begin{proof}
Put $\gamma_i=\alpha d_i+\kappa\in[0,\infty)$, the total replacement rate at
node $i$. When $\gamma_i>0$, write
$\boldsymbol\rho=(\rho_j)_{j\in V}\in\mathcal D_{\rm R}^N$ for the full state
and define the target map
$\Theta_i:\mathcal D_{\rm R}^N\to\mathcal D_{\rm R}$ by
\begin{equation}
\Theta_i(\boldsymbol\rho)=
\frac{
\alpha\sum_jw_{ij}\sum_{s=\pm1} q_\beta(s\mid o_j)\Omega_s
+\kappa\Omega_0}{\gamma_i}.
\label{eq:target-state}
\end{equation}
The numerator is a nonnegative mixture whose coefficients sum to
$\gamma_i$, so $\Theta_i(\boldsymbol\rho)$ is a density matrix. Equation
\eqref{eq:density-flow} becomes
$\dot\rho_i=\gamma_i[\Theta_i(\boldsymbol\rho)-\rho_i]$, and variation of constants gives
\begin{equation}
\rho_i(t)=e^{-\gamma_it}\rho_i(0)
+\int_0^t\gamma_i e^{-\gamma_i(t-s)}\Theta_i(\boldsymbol\rho(s))\,ds.
\label{eq:convex-solution}
\end{equation}
The weights in \eqref{eq:convex-solution} form a convex combination. If
$\gamma_i=0$, the node is stationary. The vector field also preserves trace,
Hermiticity and the real Bloch plane. For $\beta\geq1$, $q_\beta$ is
Lipschitz on the closed opinion interval. Local existence, the convex
representation and a first-exit argument give a unique global forward
solution in the Cartesian product of the node state spaces.

Taking the $\sigma_x$ and $\sigma_z$ traces of
\eqref{eq:density-flow} yields \eqref{eq:context-flow} and
\eqref{eq:opinion-flow}. These equations prove
\eqref{eq:semiconjugacy}.
\end{proof}

For a 6-regular graph $G$ with adjacency matrix
$A_G\in\{0,1\}^{N\times N}$ and opinion vector
$o=(o_i)_{i\in V}\in[-1,1]^N$, the registered values
$\alpha=\kappa=1$ and $\beta=6/5$ give
\begin{equation}
\boxed{
\dot o=-7o+A_Gg_{6/5}(o),
}
\label{eq:registered-flow}
\end{equation}
where $g_{6/5}$ acts componentwise. The theorem closes the interface between
the valid density state after the kick and this exact opinion coordinate. The
$A/B$ channels act only in the atomic kick; afterward the common classical
sender policy drives the network and context decays without feeding back into
opinion.

\section{Why eigenvalues are not enough: spectral projector geometry}
\label{sec:projectors}

The Laplacian separates collective motion into network modes. An eigenvalue
determines how quickly its mode grows or decays. A spectral projector onto the
corresponding full eigenspace determines the spatial part of a labelled input
that occupies that mode. The second object matters whenever the node labels
and injection locations are fixed. It is also basis independent when an
eigenvalue is repeated.

\subsection{Modal rates and labelled spatial alignment}

Let $G$ be a connected $d$-regular graph on $N$ nodes, where
$d\in\mathbb N$, $A\in\mathbb R^{N\times N}$ is its adjacency matrix, and
$I_N\in\mathbb R^{N\times N}$ is the identity on node space. Write
$\one=\one_N=(1,\ldots,1)^\top\in\mathbb R^N$ for the vector of ones, and let
$L=dI_N-A\in\mathbb R^{N\times N}$ be the graph Laplacian. For a time horizon
$T>0$, let $q:[0,T]\to[-1,1]$ define the homogeneous opinion trajectory
$o(t)=q(t)\one$. Linearization of
\eqref{eq:opinion-flow} gives
\begin{equation}
J_L(t)=
\left[\alpha d\{g_\beta'(q(t))-1\}-\kappa\right]I_N
-\alpha g_\beta'(q(t))L.
\label{eq:homogeneous-jacobian}
\end{equation}
If $\lambda\in\operatorname{spec}(L)$ is a distinct Laplacian eigenvalue and
$P_\lambda\in\mathbb R^{N\times N}$ is the orthogonal projector onto its full
eigenspace, then the propagator from time $t_0$ to $T$, with
$0\leq t_0\leq T$, is
\begin{align}
\Phi_L(T,t_0)&=\sum_\lambda H_\lambda(T,t_0)P_\lambda,
\label{eq:projector-response}\\
H_\lambda(T,t_0)&=
\exp\!\left(\int_{t_0}^T
\left[\alpha d\{g_\beta'(q(t))-1\}-\kappa
-\alpha g_\beta'(q(t))\lambda\right]dt\right).
\end{align}
For the full rebit variational equation, the corresponding expression is
\begin{equation}
\Phi_L^{\mathrm{full}}(T,t_0)
=\sum_\lambda P_\lambda\otimes
\begin{pmatrix}
e^{-(\alpha d+\kappa)(T-t_0)}&0\\
0&H_\lambda(T,t_0)
\end{pmatrix}.
\label{eq:full-projector-response}
\end{equation}
The positive scalar $H_\lambda(T,t_0)$ is the modal gain. These gains depend
only on the eigenvalues. The factors
$P_\lambda$ record how a fixed labelled pulse is distributed over the spatial
modes. Here $\lambda_2$ denotes the smallest positive Laplacian eigenvalue of a
connected graph. Consequently, equality of $\lambda_2$, or even equality of the complete
eigenvalue list including multiplicities, does not fix the response to a
labelled perturbation.

Equations \eqref{eq:homogeneous-jacobian} to
\eqref{eq:full-projector-response} apply only along a homogeneous reference
trajectory. They are not used to prove the moderate-amplitude winner theorem
in \cref{sec:witness}.

\subsection{The local basin boundary}

For the perturbation family considered below, the local stable manifold of the
unstable neutral state acts as the decision surface between the two stable
consensus outcomes. The coordinate $y$ below measures motion along the
consensus direction, while $z$ is a disagreement pattern with zero mean.

For the scalar opinion system on a $d$-regular graph, define the consensus
growth rate $\lambda_u\in\mathbb R$ and the unit consensus direction
$u\in\mathbb R^N$ by
\[
\lambda_u=\alpha d(\beta-1)-\kappa,
\qquad
u=\frac{\one}{\sqrt N}.
\]
Assume
\begin{equation}
0<\lambda_u<\alpha\beta\lambda_2.
\label{eq:index-one}
\end{equation}
The origin is then an index-one saddle. Its consensus rate is $\lambda_u$,
while a
transverse Laplacian mode $\lambda>0$ has rate
$\sigma_\lambda=\lambda_u-\alpha\beta\lambda<0$. Writing $o=yu+z$ with
$y\in\mathbb R$ and $z\in u^\perp:=\{x\in\mathbb R^N:u^\top x=0\}$, there is
a neighborhood $U_G\subset u^\perp$ and an odd graph function
$h_G:U_G\to\mathbb R$ whose cubic term is $h_{3,G}:u^\perp\to\mathbb R$:
\cite{HirschPughShub1977}
\[
y=h_G(z)=h_{3,G}(z)+O(\|z\|^5).
\]
For a scalar argument $s\in[-1,1]$, the Taylor expansion at $s=0$ is
\[
g_\beta(s)=\beta s+b_3s^3+O(s^5),
\qquad
b_3=-\frac{\beta(\beta^2-1)}3,
\]
the cubic graph is
\begin{equation}
h_{3,G}(z)
=-\alpha b_3d\int_0^\infty e^{-\lambda_ut}
u^\top\left[(e^{A_st}z)^{\circ3}\right]dt,
\label{eq:h3-integral}
\end{equation}
where $A_s:u^\perp\to u^\perp$ is the stable transverse restriction of the
Jacobian at the origin, and the power $x^{\circ3}$ means componentwise cubing.
In terms of complete
eigenspace projectors,
\begin{equation}
h_{3,G}(z)
=-\alpha b_3d
\sum_{\lambda_a,\lambda_b,\lambda_c>0}
\frac{
u^\top\left[
(P_{\lambda_a} z)\circprod(P_{\lambda_b} z)\circprod(P_{\lambda_c} z)
\right]
}{\lambda_u-\sigma_{\lambda_a}-\sigma_{\lambda_b}-\sigma_{\lambda_c}}.
\label{eq:h3-projectors}
\end{equation}
The sum runs over the distinct positive Laplacian eigenvalues
$\lambda_a,\lambda_b,\lambda_c\in\operatorname{spec}(L)\setminus\{0\}$, and $\circprod$
denotes componentwise multiplication. This formula is independent of the choice of basis inside a repeated
eigenspace. The eigenvalues fix its denominators, but the projectors and the
labelled pulse fix its nonlinear overlap numerators.

For balanced directions $z,v\in u^\perp$ and a real scale
$\varepsilon\to0$, define the scalar oriented response of the decision
surface to a pulse in direction $v$ by
\begin{equation}
\Gamma_G(\varepsilon;z,v)
=-Dh_G(\varepsilon z)v
=\varepsilon^2K_G(z,v)+O(\varepsilon^4),
\qquad K_G(z,v):=-Dh_{3,G}(z)v.
\label{eq:local-crossing}
\end{equation}
Thus the first response that depends on the graph comes from the curvature of
the basin boundary. The leading coefficient contains overlaps involving the
spectral projectors in
\eqref{eq:h3-projectors}. The remainder in \eqref{eq:local-crossing} concerns
the derivative of the boundary; a finite pulse also has a displacement
remainder.

\begin{proposition}[Local basin crossings with opposite signs on a cospectral pair]
\label{prop:local-branch}
For the registered vectors in \eqref{eq:center-source}, the commonly labelled
rook and Shrikhande graphs $\Rook$ and $\Shri$ on $\mathbb Z_4^2$, defined in
\cref{sec:witness}, satisfy, with $K_G(\tau)=K_G(z_\tau,v_T)$,
\begin{equation}
K_{\Rook}(\tau)>\frac{23}{2500},
\qquad
K_{\Shri}(\tau)<-\frac{23}{2500},
\qquad
0<\tau\leq\frac1{100}.
\label{eq:K-signs}
\end{equation}
With the registered scalar parameters
$\alpha=\kappa=1$, $d=6$ and $\beta=6/5$, the initial conditions
\[
o_G^\pm(0)=\varepsilon z_\tau\pm\delta v_T,
\qquad
0<\varepsilon\leq\frac1{2000},
\qquad
0<\delta\leq\frac{\varepsilon}{1000},
\]
use $\varepsilon$ as the amplitude of the background center and $\delta$ as the
magnitude of the positive order pulse; the displayed $\pm$ sign selects its
direction.
They converge to
\begin{equation}
\begin{array}{c@{\qquad}cc}
\toprule
G&+\delta v_T&-\delta v_T\\
\midrule
\Rook&+q_\star\one&-q_\star\one\\
\Shri&-q_\star\one&+q_\star\one\\
\bottomrule
\end{array}
\label{tab:local-winners}
\end{equation}
where $q_\star\in(0,1)$ is the unique positive root of
$7q=6g_{6/5}(q)$.
\end{proposition}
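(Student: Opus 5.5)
The proof has two parts: the sign bounds \eqref{eq:K-signs} on the cubic coefficient, and the basin-crossing argument that converts them into the winner table \eqref{tab:local-winners}.

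\textbf{Step 1: the sign bounds.} First I check the hypothesis \eqref{eq:index-one} at the registered point. We have $\lambda_u=6\cdot\frac15-1=\frac15$, and $\alpha\beta\lambda_2=\frac{24}{5}$, so \eqref{eq:index-one} holds and the transverse rates are $\sigma_4=-\frac{23}{5}$ and $\sigma_8=-\frac{43}{5}$. The only positive Laplacian eigenvalues are $4$ and $8$. Hence the sum in \eqref{eq:h3-projectors} has only finitely many terms, one for each triple $(a,b,c)\in\{4,8\}^3$, and each denominator is an explicit rational number.

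Differentiating gives
\[
K_G(z,v)=3\alpha b_3 d\sum \frac{u^\top[(P_az)\circ(P_bz)\circ(P_cv)]}{\lambda_u-\sigma_a-\sigma_b-\sigma_c},
\]
where the symmetrisation over which factor carries $v$ has been absorbed. I compute the projectors exactly from the adjacency matrices on $\mathbb Z_4^2$. Since $P_8=I-\frac1{16}\one\one^\top-P_4$ and $P_4$ is a rational polynomial in $A$, all entries are rational.

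With $z_\tau=z_0+\tau p$, each $K_G(\tau)$ is then an exact quadratic polynomial in $\tau$ with rational coefficients. I evaluate it at $\tau=0$, where $z_0$ is the fixed reference, and bound the linear and quadratic coefficients on $[0,\frac1{100}]$. This should give the stated margins $\pm\frac{23}{2500}$ by exact rational arithmetic. I expect $K_\Rook(0)$ and $K_\Shri(0)$ to already have opposite signs, with the small $\tau$ only breaking the residual symmetry. If $K(0)$ turns out to vanish for one of the graphs, the bound must instead come from the $\tau$-linear term divided appropriately, and I would recheck the strict inequality near $\tau=0$.

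\textbf{Step 2: from the sign of $K_G$ to the winner.} The flow \eqref{eq:registered-flow} is odd and cooperative (the off-diagonal Jacobian entries $A_{ij}g'\ge0$), and $\pm q_\star\one$ are its only stable equilibria, attracting everything off the stable manifold in a neighborhood. For the initial datum $o=\varepsilon z_\tau\pm\delta v_T$, its consensus coordinate is $y_0=0$, because both $z_\tau$ and $v_T$ are balanced. The point therefore lies on the positive side of the manifold exactly when $0>h_G(\varepsilon z_\tau\pm\delta v_T)$.

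Expanding,
\[
h_G(\varepsilon z\pm\delta v)=h_G(\varepsilon z)\pm\delta\,Dh_G(\varepsilon z)v+O(\delta^2\varepsilon).
\]
Oddness of $h_G$ does not kill $h_G(\varepsilon z)$, and this is the main obstacle: the background term $h_{3,G}(\varepsilon z_\tau)=O(\varepsilon^3)$ could dominate the pulse term $\delta\varepsilon^2K_G=O(\varepsilon^3\cdot10^{-3})$. I would first check whether $h_{3,G}(z_\tau)$ vanishes, for instance because $z_0$ is antisymmetric under an automorphism-type sign flip, with the $\tau p$ contribution small. Otherwise the scaling $\delta\le\varepsilon/1000$ must be understood as making $\delta$ the dominant perturbation beyond $h_G(\varepsilon z)$, which requires an explicit bound on $|h_{3,G}(z_\tau)|$.

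\textbf{Step 3: quantitative control near the saddle.} The local analysis needs explicit constants. I would bound the quintic remainder of $h_G$ and of $Dh_G$ using a Lyapunov–Perron contraction on a ball of radius about $10^{-3}$. The spectral gap $\lambda_u-\sigma_4=\frac{24}{5}$ makes this contraction strong. The sign of $y$ then sets the side of the manifold. Along the unstable direction, monotonicity of the cooperative flow carries the trajectory to $\pm q_\star\one$: compare with the homogeneous solution $q(t)\one$ starting from a small positive consensus. Finally, reversing $v_T$ flips the sign by oddness, which completes the table.
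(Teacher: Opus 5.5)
\textbf{Step 1.} This is the paper's route: exact contraction of the complete projectors in \eqref{eq:h3-projectors} gives quadratics in $\tau$ whose constant terms, $\pm1604394/172272625\approx\pm0.00931$, already carry the margin. There is one arithmetic slip. $\sigma_8=\frac15-\frac65\cdot8=-\frac{47}{5}$, not $-\frac{43}{5}$, so the four denominators should be $14,\frac{94}{5},\frac{118}{5},\frac{142}{5}$.

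\textbf{The gap in Step 2.} The real problem is the step you flag yourself, and your fallback cannot work. The hypothesis only bounds $\delta$ from above, $\delta\le\varepsilon/1000$, so $\delta/\varepsilon$ can be arbitrarily small and $\delta$ can never be made ``dominant''. If $h_G(\varepsilon z_\tau)\neq0$, then for small $\delta$ both $\varepsilon z_\tau\pm\delta v_T$ lie on the same side of the stable manifold, and the two columns of the table coincide. Even $h_{3,G}(z_\tau)=0$ would not suffice: an $O(\varepsilon^5)$ remainder of $h_G(\varepsilon z_\tau)$ would beat $\delta\varepsilon^2K_G$ once $\delta$ is of order $\varepsilon^3$. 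You need the exact identity $h_G(\varepsilon z_\tau)=0$ for all admissible $\varepsilon,\tau$.

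\textbf{The missing idea} is the signed symmetry $\mathcal S=-P_T$, where $P_T$ is translation by $(2,2)$ on $\mathbb Z_4^2$.
\begin{itemize}
\item $P_T$ is an automorphism of both graphs.
\item The arrays are built so that $P_Tz_\tau=-z_\tau$ for every $\tau$, with $P_Tv_T=v_T$ and $P_T\one=\one$. The deformation $p$ is itself $P_T$-antisymmetric; it exists to rule out other signed quotients, not to break this symmetry.
\item Because $g$ is odd, $\mathcal S$ commutes with $F_G$, preserves $u^\perp$ and sends $u\mapsto-u$.
\item Uniqueness of the local stable graph then gives $h_G(\mathcal Sz)=-h_G(z)$, hence $h_G(\varepsilon z_\tau)=0$.
\end{itemize}
Only then does $\Psi_G(x)=u^\top x-h_G(Qx)$, with $Q$ the projector onto $u^\perp$, satisfy $\Psi_G(\varepsilon z_\tau\pm\delta v_T)=\pm\delta\varepsilon^2K_G(\tau)$. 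The remainder is bounded by $0.001308\,\delta\varepsilon^2$ using the Lyapunov--Perron constants, which is well inside the $23/2500$ margin.

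\textbf{Relating the two arms.} This also needs $\mathcal S$, not oddness of the flow. Oddness maps $\varepsilon z_\tau+\delta v_T$ to $-\varepsilon z_\tau-\delta v_T$. By contrast, $\mathcal S$ maps it to $\varepsilon z_\tau-\delta v_T$ and maps $q\one$ to $-q\one$.

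\textbf{Orthant entry.} Your comparison with a homogeneous solution presupposes that the trajectory enters a strict orthant. The paper supplies this by tracking $w=\Psi_G(x)$, which obeys $\dot w=\Lambda_Gw$ with $\Lambda_G\in(0.1999968,0.2000032)$. It follows $w$ until $|w|=1/4000$, by which time the transverse part has decayed below the required coordinate margin.
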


\begin{proof}[Proof outline]
Let $P_T\in\mathbb R^{16\times16}$ be the permutation matrix induced by
translation of the labelled vertices by $(2,2)$ modulo $4$. It is an
automorphism of both graphs. The signed
symmetry $\mathcal S=-P_T$ satisfies
$\mathcal Sz_\tau=z_\tau$, $\mathcal Sv_T=-v_T$ and
$\mathcal S\one=-\one$. It pins $\varepsilon z_\tau$ to the local stable manifold.
Exact projector arithmetic in \eqref{eq:h3-projectors} gives
\eqref{eq:K-signs}. Explicit Lyapunov--Perron bounds and bounds on the finite
displacement preserve the four strict signs relative to the decision surface
over the stated nonzero range.
Here a strict positive or negative orthant consists of vectors whose
coordinates are all strictly positive or all strictly negative. The two sides
enter opposite strict orthants. Cooperativity and scalar comparison then give the
consensus states in \eqref{tab:local-winners} \cite{Smith1995}. Full rational inequalities are
provided in the Supplementary Information.
\end{proof}

\Cref{prop:local-branch} is the analytic mechanism result. The registered
$0.01$ channel pulse is larger than this proposition allows, so its outcome is
certified directly in the next section.

\section{Explicit opposite winners on a cospectral pair}
\label{sec:witness}

\subsection{Graphs and registered finite kick}

The rook and Shrikhande graphs are a controlled laboratory for separating an
eigenvalue list from the spatial geometry of its eigenspaces. They have the
same size, degree and complete Laplacian spectrum, but different edge sets.
Keeping the vertex labels fixed lets us apply exactly the same background
state and exactly the same four-node order schedule to both graphs.

On $V=\mathbb Z_4^2$, two vertices are adjacent in the rook graph $\Rook$ if
they share a row or column. In the Shrikhande graph $\Shri$, two vertices are
adjacent when their modular difference is one of
\[
(\pm1,0),\qquad(0,\pm1),\qquad(1,1),\qquad(-1,-1).
\]
Both graphs are connected and 6-regular, and
\begin{equation}
\operatorname{spec}(L_{\Rook})
=\operatorname{spec}(L_{\Shri})
=\{0^{(1)},4^{(6)},8^{(9)}\}.
\label{eq:cospectral}
\end{equation}
They are nonisomorphic because their clique numbers are four and three.

The pulse arrays in \eqref{eq:p-vector} were designed under three constraints.
They have zero sum, the four message nodes have zero background opinion, and
translation by $(2,2)$ supplies the signed symmetry used to place the unpulsed
center on the decision surface. The deformation $p$ also prevents the witness
from collapsing to a smaller signed coordinate quotient. For
$0<\tau\leq0.01$, exact signed equitable refinement separates all 16
coordinates on both graphs. The unequal clique numbers exclude a full
signed-permutation conjugacy. Thus the construction lies in no proper
signed-coordinate quotient, and the two graph systems are not related by a
signed-coordinate relabelling. It remains an engineered existence witness
rather than a genericity claim; the exact refinement certificate is given in
the Supplementary Information.

The registered kick uses \eqref{eq:registered-source}. Its center
$(4/5)z_{1/100}$ is nonhomogeneous. Along the subsequent center trajectory,
the Jacobian is node dependent and generally does not commute with the
Laplacian. The homogeneous propagator \eqref{eq:projector-response} and the
local expansion \eqref{eq:local-crossing} therefore do not prove the outcome
at this point.

\medskip
\noindent\textbf{Certified bound at finite amplitude.}\par
For the registered initial conditions
$o_G^{(\eta)}(0)=(4/5)z_{1/100}+\eta(1/100)v_T$, interval integration with
outward rounding of \eqref{eq:registered-flow} to $T=3$ gives, in the positive
arm,
\begin{equation}
\min_i o_{\Rook,i}^{(+)}(3)>2.142125325208\times10^{-5},
\qquad
\max_i o_{\Shri,i}^{(+)}(3)<-2.142758397374\times10^{-5}.
\label{eq:interval-orthants}
\end{equation}
The signed symmetry supplies the reversed inequalities in the negative arm.
The enclosures include roundoff and truncation error in floating point
arithmetic. A validated Taylor map of degree four with step size $10^{-3}$ encloses the four
trajectories for 3000 steps. Exact integer sign tests validate the algebraic
evaluation of $g_{6/5}$, and an outward error recurrence encloses both Taylor
remainder and numerical roundoff. Supplementary Lemma~S6.1 gives the formal
statement, complete construction and constants that can be checked by machine.

Here a strict positive or negative orthant means that all 16 node opinions
have the same strict sign. The values of order $10^{-5}$ are certified margins
at the finite time $T=3$. They are neither estimates of the final consensus
magnitude nor statistical error bars. Once all nodes have the same strict
sign, cooperativity keeps the trajectory in that sign region and scalar
comparison determines its limiting consensus.

\begin{theorem}[Same local order effect, opposite collective limits]
\label{thm:opposite-winners}
Let $G$ be $\Rook$ or $\Shri$ with the common vertex ordering of
$\mathbb Z_4^2$. Use the common pre-kick density states, the crossed atomic
schedule and the registered values in \eqref{eq:registered-source}. After the
kick, evolve every node by \eqref{eq:density-flow} with
$W=A_G$, $\alpha=\kappa=1$ and $\beta=6/5$. Then
\begin{equation}
\begin{array}{c@{\qquad}cc}
\toprule
&\eta=+1&\eta=-1\\
\midrule
\Rook&+q_\star\one&-q_\star\one\\
\Shri&-q_\star\one&+q_\star\one\\
\bottomrule
\end{array}
\label{tab:moderate-winners}
\end{equation}
gives the limiting opinion vectors, where
\begin{equation}
q_\star\in(0,1),
\qquad q_\star\approx0.4338978338,
\qquad
7q_\star=6g_{6/5}(q_\star).
\label{eq:qstar}
\end{equation}
In terms of the full density matrices, every node converges to
\begin{equation}
\rho_i(t)\longrightarrow
\frac12\left(I_2\pm q_\star\sigma_z\right),
\label{eq:density-winners}
\end{equation}
with the signs in \eqref{tab:moderate-winners}. Thus the same finite channels,
nodewise preparations, labels and crossed-order schedule select opposite
macroscopic opinions on two networks with the same complete Laplacian
eigenvalue multiset.
\end{theorem}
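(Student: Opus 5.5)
The proof will assemble the earlier results in the order laid out in the introduction: kick, exact opinion coordinate, finite-time certificate, then asymptotic comparison.

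\emph{Kick and reduction to opinions.} First I would use \cref{thm:local-source} and the crossed schedule to fix the post-kick state \eqref{eq:postkick}. Every node is a valid rebit with $c_i(0^+)\in\{0,253/2100\}$ and $o^{(\eta)}(0^+)=(4/5)z_{1/100}+\eta(1/100)v_T$. This vector is identical for $\Rook$ and $\Shri$ because the labelling is common. By \cref{thm:network-flow}, with $W=A_G$, $d_i=6$ and $\alpha=\kappa=1$, the density flow exists globally and remains in $\mathcal D_{\rm R}^{16}$. The context decays exactly as $c_i(t)=e^{-7t}c_i(0^+)\to0$. The semiconjugacy \eqref{eq:semiconjugacy} shows that $o(t)$ solves \eqref{eq:registered-flow}. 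Hence \eqref{eq:density-winners} follows once the opinion limits in \eqref{tab:moderate-winners} are established, because $\rho_i=\tfrac12(I_2+c_i\sigma_x+o_i\sigma_z)$.

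\emph{Finite-time orthant entry.} Next I would invoke the interval certificate \eqref{eq:interval-orthants} (Supplementary Lemma~S6.1) for the positive arm. It gives $o_{\Rook}^{(+)}(3)$ in the strict positive orthant and $o_{\Shri}^{(+)}(3)$ in the strict negative orthant. For the negative arm I would use oddness. Since $g_{6/5}$ is odd, \eqref{eq:registered-flow} commutes with $o\mapsto-o$ and with $P_T$, which is an automorphism of both graphs. Since $-P_Tz_\tau=z_\tau$ and $-P_Tv_T=-v_T$, applying the signed symmetry $\mathcal S=-P_T$ to the $\eta=+1$ initial datum gives the $\eta=-1$ datum. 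Therefore $o^{(-)}(t)=\mathcal S o^{(+)}(t)$, and this map flips each orthant.

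\emph{From orthant to consensus.} I would then establish the long-time behaviour. The off-diagonal Jacobian entries $A_{ij}g_{6/5}'(o_j)$ are nonnegative, and $g'_{6/5}>0$ on $(-1,1)$. The system is therefore cooperative, and Kamke's comparison theorem \cite{Smith1995} applies. Take the positive case and set $m=\min_i o_i(3)>0$. Let $\underline q(t)$ solve the homogeneous scalar equation $\dot q=-7q+6g_{6/5}(q)$ with $\underline q(0)=m$. Let $\overline q(t)$ solve the same equation with $\overline q(0)=\max_i o_i(3)<1$. Comparison gives $\underline q(t-3)\one\le o(t)\le\overline q(t-3)\one$. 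The scalar field $f(q)=-7q+6g_{6/5}(q)$ is odd. Its derivative at $0$ is $6\cdot\frac65-7=\frac15>0$, and $f(q)\to-1$ as $q\to1$. I need to show that $f$ has exactly one positive zero $q_\star$, with $f>0$ on $(0,q_\star)$ and $f<0$ on $(q_\star,1]$. For this I would use concavity of $g_\beta$ on $[0,1)$ when $\beta>1$, which follows from $g_\beta=\tanh(\beta\,\mathrm{artanh})$ by a direct second-derivative sign check. Both scalar solutions then converge to $q_\star$, and so $o(t)\to q_\star\one$. The negative-orthant case is identical by oddness. The four cases together give \eqref{tab:moderate-winners}, and the numerical value in \eqref{eq:qstar} can be bracketed by interval evaluation of $f$.

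\emph{Main obstacle.} The hard part is the validated enclosure \eqref{eq:interval-orthants}. The certified margins are only about $2\times10^{-5}$ after 3000 Taylor steps, so the enclosure width must be kept below that through the whole integration. A further difficulty is that $g_{6/5}$ involves a non-integer power, so it must be evaluated rigorously. I would take this from Supplementary Lemma~S6.1 rather than reprove it. Everything else is structural: the symmetry argument, cooperativity, and scalar monotonicity.
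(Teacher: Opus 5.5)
Your proposal is correct and follows the paper's proof essentially step for step. The steps are: the post-kick state from the local channel theorem, then the exact opinion reduction and $e^{-7t}$ context decay from the network-flow theorem, then the interval certificate for strict orthant entry with the signed symmetry $\mathcal S=-P_T$ for the reversed arm, then cooperative min/max comparison against $\dot q=-7q+6g_{6/5}(q)$, and finally the lift back to density matrices. The only cosmetic difference is that you get uniqueness of $q_\star$ from concavity of $g_{6/5}$ on $[0,1)$ together with $f'(0)=1/5>0$ and $f(1)=-1$. The paper instead uses the equivalent fact that $g_{6/5}(q)/q$ decreases strictly from $6/5$ to $1$.
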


\begin{proof}
By \cref{thm:local-source}, the atomic crossed schedule generates exactly the
state after the kick in \eqref{eq:postkick}. By \cref{thm:network-flow}, its expressed
opinion follows \eqref{eq:registered-flow} exactly, and its contextual
coordinate decays as $e^{-7t}$.

The certified bounds in \eqref{eq:interval-orthants} give the strict positive
and negative orthants, and symmetry gives the reversed arms.
For a trajectory in the strict positive orthant, scalar minimum and maximum
comparison squeeze every coordinate between positive solutions of
$\dot q=-7q+6g_{6/5}(q)$, both of which converge to the unique positive
equilibrium $q_\star$. Indeed, elementary differentiation shows that
$q\mapsto g_{6/5}(q)/q$ decreases strictly from $6/5$ to $1$ on $(0,1)$, so it
crosses $7/6$ exactly once. The negative case follows by odd symmetry. This proves
\eqref{tab:moderate-winners}. Finally, the semiconjugacy and the decay of
$c_i$ lift the scalar limits to \eqref{eq:density-winners}.
\end{proof}

\begin{figure}[!t]
\centering
\IfFileExists{figures/fig2_witness.pdf}{%
  \includegraphics[width=\linewidth]{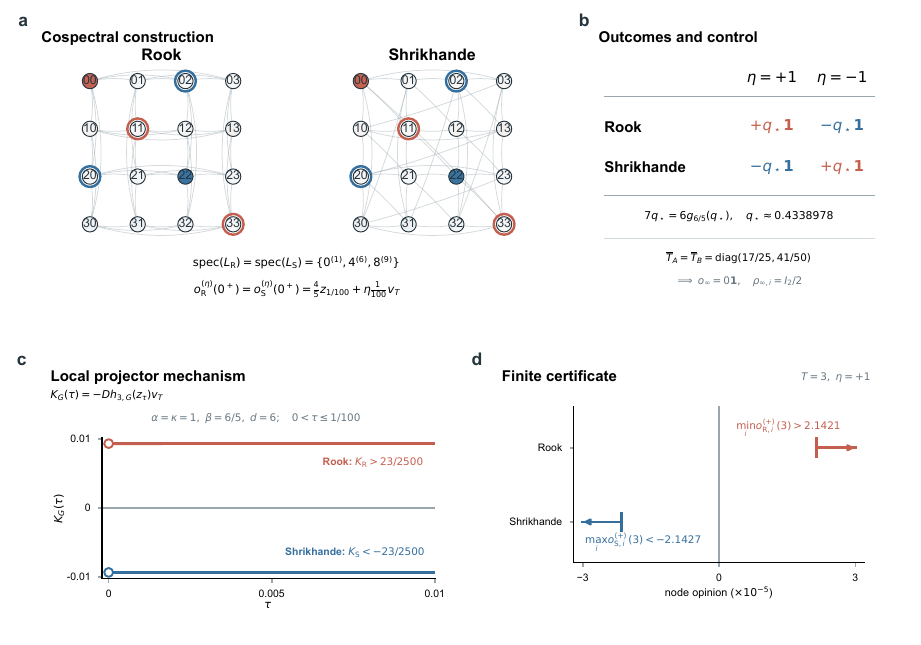}%
}{%
  \fbox{\parbox[c][0.27\textheight][c]{0.94\linewidth}{\centering
  Placeholder for the cospectral witness figure.\\[0.5em]
  Rook and Shrikhande graphs, labelled pulse arrays, projector signs,
  interval margins and the opposite-winner table.}}
}
\caption{\textbf{A cospectral construction, two parallel proof branches and
their outcomes.}
Panel a shows the rook and Shrikhande graphs with the same $N=16$ vertex
labels, degree $d=6$, complete Laplacian spectrum and registered input.
The node interiors encode the registered background
$(4/5)z_{1/100}\in\mathbb R^{16}$, while
the rings encode the balanced source $v_T\in\{-1,0,1\}^{16}$. The displayed
rings correspond to the arm $\eta=+1$; $\eta=-1$ reverses their signs.
Panel b records the resulting limits, where $q_\star$ is the unique positive
root shown in the panel, and the separate exact control for the parity-twirled
Bloch maps $\overline T_A$ and $\overline T_B$.
Panels c and d are separate proof branches that use the same graph pair, labels
and source directions at different amplitudes. Panel c gives the local
small-amplitude mechanism; the complete eigenspace projectors $P_\lambda$ enter
$h_{3,G}$ through \eqref{eq:h3-projectors}. In the local initial condition
$o_G^\pm(0)=\varepsilon z_\tau\pm\delta v_T$, $\varepsilon$ and $\delta$ are
the background and pulse amplitudes. Its theorem is restricted to
$0<\varepsilon\leq1/2000$ and $0<\delta\leq\varepsilon/1000$. Panel d gives an
independent deterministic certificate at the registered background scale
$\varrho=4/5$, center deformation $\tau=1/100$ and pulse amplitude
$\mu=1/100$. Its bounds are computed with outward rounding. The rays are
one-sided enclosures rather than statistical error bars; the full precision
bounds appear in
\eqref{eq:interval-orthants}. Signed symmetry supplies the arm $\eta=-1$.
Under the control in panel b, both arms on both graphs converge to $I_2/2$ at
every node.}
\label{fig:witness}
\end{figure}
\FloatBarrier

\subsection{Controls and local robustness of the mechanism}

The local parity twirl in \eqref{eq:twirl} removes the opinion pulse that
depends on the arm while leaving the preparation unchanged. Both arms then start the
network flow from $(4/5)z_{1/100}$. The signed translation symmetry places this
center in an invariant zero-mean subspace. The identity
$A_G^2=4I_N+2\one\one^\top$ and the bound
$0\leq g_{6/5}'\leq6/5$ make that transverse subspace contract, so the center
converges to zero, and context also decays.

\paragraph{Network-level parity control.}
If each local channel is replaced by its parity twirl while the preparation is
held fixed, both arms converge on both graphs to $I_2/2$ at every node. The
macroscopic arm difference is zero.

The four initial conditions in \cref{thm:opposite-winners} lie in open basins
of asymptotically stable consensus states. Finite channel composition and the
common preparation depend continuously on the two channel axes, their two
partial-dephasing fractions and the preparation vector. Taking the finite
intersection of the four basin preimages, one for each graph and arm, gives an
open neighborhood common to all four cases.

To specify the relevant topology, let
$\mathring{\mathbb B}_2=\{r\in\mathbb R^2:\|r\|_2<1\}$ be the open Bloch disk.
For an axis $n\in S^1$, a dephasing fraction $\chi\in(0,1)$ and a preparation
$r\in\mathring{\mathbb B}_2$, define
$\mathcal E_{n,\chi}=(1-\chi)\Id+\chi\mathcal D_n$, where $\mathcal D_n$ is
given by the projector formula preceding \eqref{eq:partial-dephasing} with axis $n$.
The channel parameter
space is
\[
\mathcal P=(S^1)^2\times(0,1)^2\times\mathring{\mathbb B}_2,
\]
whose coordinates are the two channel axes, two dephasing fractions and the
common preparation at the active nodes. Its registered point is
$(n_A,n_B,1/2,1/2,r_{\rm pre})\in\mathcal P$.

\begin{corollary}[Open parameter neighborhood]
With the graphs, labelled schedule and network flow fixed, there is an open
neighborhood, relative to $\mathcal P$, of the
registered channel axes, channel strengths and common preparation at the active nodes
for which all four entries of
\eqref{tab:moderate-winners} are unchanged. After shrinking the neighborhood
if necessary, the two channels remain noncommuting and their ordered
compositions retain a nonzero opinion contrast on the common preparation.
\end{corollary}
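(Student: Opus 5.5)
The proof will be a continuity argument that pushes the parameter point through the chain
\[
\text{parameters}\mapsto\text{post-kick state}\mapsto\text{state at time }T=3\mapsto\text{sign test}.
\]
At each stage we check that the relevant set is open.

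\textbf{Step 1: the kick depends continuously on the parameters.} Fix $\pi=(n_A',n_B',\chi_A,\chi_B,r)\in\mathcal P$. The Bloch action of $\mathcal E_{n,\chi}$ is
\[
T_{n,\chi}=(1-\chi)I_2+\chi\,nn^\top ,
\]
which is polynomial in $(n,\chi)$. The post-kick state is then given nodewise by $T_{n_B',\chi_B}T_{n_A',\chi_A}r$ or $T_{n_A',\chi_A}T_{n_B',\chi_B}r$ at the active nodes, and by the fixed vector $(0,\varrho z_{\tau,i})^\top$ at the inactive nodes. So the post-kick state is a polynomial, hence continuous, function $\Psi_\eta:\mathcal P\to(\mathbb R^2)^{16}$. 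Each $\mathcal E_{n,\chi}$ is a convex mixture of CPTP maps, so it stays CPTP; hence $\Psi_\eta$ takes values in $\mathbb B_2^{16}$ and \cref{thm:network-flow} applies to every post-kick state.

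\textbf{Step 2: the strict-orthant condition at $T=3$ is open.} Because $g_{6/5}$ is Lipschitz on $[-1,1]$, the opinion flow $\phi_G^3$ of \eqref{eq:registered-flow} depends continuously on its initial condition, for example by Gr\"onwall. The set $\{x:\min_i x_i>0\}$ of strict positive orthant vectors is open, and so is its negative counterpart. The certificate \eqref{eq:interval-orthants}, together with the signed symmetry for $\eta=-1$, places each of the four registered trajectories at time $3$ in the appropriate open orthant. Therefore each preimage
\[
U_{G,\eta}=\{\pi:\phi_G^3(\pi_o\Psi_\eta(\pi))\in\text{the required orthant}\}
\]
is open in $\mathcal P$ and contains the registered point. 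Inside each $U_{G,\eta}$, the cooperativity and scalar-comparison argument from the proof of \cref{thm:opposite-winners}, lifted by the semiconjugacy \eqref{eq:semiconjugacy} and the decay of $c_i$, gives the same limit $\pm q_\star\one$ and the same density limit. That argument uses only the orthant membership at a finite time, not the specific kick values.

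One point needs care here. The symmetry argument that produced the $\eta=-1$ certificate from the $\eta=+1$ one is no longer available once the parameters are perturbed, since the two arms then give different post-kick contexts and opinions. This does not matter: we use the symmetry only at the registered point, to establish membership of that point in each $U_{G,\eta}$. Openness then comes from continuity separately for each of the four cases, and we take $U=\bigcap_{G,\eta}U_{G,\eta}$.

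\textbf{Step 3: noncommutativity and order contrast persist.} The maps
\[
\pi\mapsto T_{n_B',\chi_B}T_{n_A',\chi_A}-T_{n_A',\chi_A}T_{n_B',\chi_B}
\qquad\text{and}\qquad
\pi\mapsto e_o^\top\bigl(T_{n_B',\chi_B}T_{n_A',\chi_A}-T_{n_A',\chi_A}T_{n_B',\chi_B}\bigr)r
\]
are continuous. By \eqref{eq:commutator} and \eqref{eq:exact-source}, their values at the registered point are nonzero, namely $\tfrac{42}{625}\left(\begin{smallmatrix}0&-1\\1&0\end{smallmatrix}\right)$ and $1/50$. Intersecting $U$ with the open sets on which both maps stay nonzero gives the required neighborhood.

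\textbf{Main obstacle.} The delicate step is Step 2: making sure that the finite-time certificate, proved only at the single registered point, really yields membership in an open set. The orthant inequalities in \eqref{eq:interval-orthants} are strict, so this works, and continuity of $\phi_G^3$ in the initial data is standard. A bound on the size of the neighborhood is not required and I would not attempt to give one.
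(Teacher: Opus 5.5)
Your proof is correct, but it gets openness by a different mechanism from the paper's.

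The paper first shows that the two consensus tuples $\boldsymbol\rho^\star_\pm$ are asymptotically stable equilibria of the complete density flow:
\begin{itemize}
\item the consensus rate $-7+6g'(q_\star)$ is negative because $g(q)/q$ decreases strictly through $7/6$ at $q_\star$;
\item the transverse rates $-7\pm2g'(q_\star)$ are negative because $0\le g'\le\frac65$;
\item the contextual rates equal $-7$.
\end{itemize}
Their basins are therefore open. The neighborhood is the intersection of the four basin preimages under the continuous map from parameters to post-kick states, and the interval certificate places the registered points in those basins.

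You instead pull back the strict orthants under the time-$3$ opinion map. At every perturbed point you then rerun the cooperative comparison argument from the proof of \cref{thm:opposite-winners}, lifting through the semiconjugacy and the explicit decay of $c_i$.

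What your route buys:
\begin{itemize}
\item It needs no linearization at $\pm q_\star\mathbf 1$ and no appeal to openness of basins. It uses only continuous dependence on initial data and a global argument already proved.
\item It is readily quantitative. The sup-norm logarithmic norm of $DF_G$ is at most $-7+6\cdot\frac65=\frac15$, so deviations grow by less than $e^{3/5}<1.823$ up to $T=3$. Hence any post-kick opinion perturbation of sup-norm below $1.17\times10^{-5}$ keeps all four strict orthants at $T=3$. This is the same device the Supplement uses for its scalar parameter box.
\end{itemize}

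What the paper's route buys: it proves local asymptotic stability of both consensus states in the full state space, a fact that does not depend on the horizon $T=3$. Its basin preimage also contains your orthant preimage.

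Your remark that the signed symmetry is needed only at the registered point, with the four cases handled separately afterwards, is exactly what the paper's four-fold intersection does implicitly.
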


The corollary establishes persistence of the outcome with opposite winners and
of the local order contrast in some open neighborhood of the registered point.
It does not quantify that neighborhood; doing so would require an additional
certificate.

\section{Discussion}

The main topological insight is finer than the statement that network structure
matters. Nor is it merely that eigenvectors contain information absent from
eigenvalues. A specific overlap involving eigenspace projectors enters the
leading curvature of the decision surface for a fixed labelled state and
pulse. The rook and Shrikhande graphs share every Laplacian eigenvalue, but this
coefficient has opposite signs. The registered example then establishes the
finite outcome independently by following the nonlinear trajectories with
validated enclosures.

The model is intentionally modular. Noncommuting channels provide one
initialization whose result depends on order, and a common autonomous nonlinear
flow then turns the resulting pulse into a basin choice. The channels act only
at initialization; subsequent edge dynamics follows the common flow. The
density matrix formulation still serves three structural roles. It constrains
every local preparation and update to a valid state space, exposes the pathway
from context to opinion and its controls, and supplies a network equation for
the complete state whose opinion coordinate is exact. Context does not feed
back into opinion after the kick, so the later feedback from senders is
classical.

The construction is operational and does not treat either a human or an LLM as
a physical quantum system. General representations based on classical
instruments can also reproduce broad classes of sequential decision data
\cite{TullOzawa2026}. The contribution is a closed mathematical example in
which a finite noncommuting update pair, a valid flow of complete states and a
labelled cospectral comparison can all be checked exactly or with rigorous
numerical bounds.

The next empirical test is to estimate a separate update map for each message
type from
LLM responses under reproducible preparation and readout prompts. Existing LLM
agent studies already show consensus, fragmentation, history dependence,
asymmetric persuasion and dependence on presentation order
\cite{ChuangEtAl2024,CisnerosVelarde2024,CauEtAl2025,
StengelEskinEtAl2025}. The cited studies do not identify the two channels assumed here.
A useful calibration would therefore test predictions for both $AB$ and $BA$
without refitting, quantify uncertainty in the estimated pulse, and compare that
uncertainty with a certified basin margin. The current paper establishes the
theoretical target for that experiment. Its graph pair and pulse remain an
engineered existence witness. The claim does not extend to cospectral networks
universally or generically.

\section{Conclusion}

The same two finite updates produce opinions $+0.01$ and $-0.01$ when their
order is reversed. A balanced four-node placement of this effect selects
opposite consensuses on two networks with the same complete Laplacian
eigenvalue list. A particular overlap of spectral projectors determines the
leading local response of the basin boundary, and a rigorous interval
certificate independently proves the registered outcomes at finite amplitude.
Within the declared model, the result gives a precise route from a single
balanced pulse of message order to collective selection, while keeping the
quantum-like update, classical network feedback and two proof regimes
explicitly separated.

\section*{Data and code availability}

The paper uses no empirical data. The joint algebraic verification and the
verification at finite amplitude are included with the arXiv source as
ancillary code. From the submission source root, run
\begin{center}
\texttt{python3 -B anc/paper1\_rebit\_theory/verify\_rebit\_closure\_v0\_1.py
--with-winner}.
\end{center}
The Supplementary Information records the exact rational identities,
stable-manifold bounds, outward interval construction and reproduction
commands.

{\small
\bibliographystyle{unsrt}
\bibliography{references}
}

\end{document}


\maketitle

This Supplement gives the exact local channel algebra, the network equation
that preserves density matrices, and the two independent winner certificates
used in the main text. Section~S5 uses a small-radius calculation to explain
the role of complete Laplacian eigenspace projectors. Section~S6 independently
proves the certificate at moderate amplitude by outward interval enclosures.

\section{Partial-dephasing channels and exact order algebra}
\label{sec:channels}

Let $\mathbb M_2=\mathbb C^{2\times2}$ be the complex $2\times2$ matrix
algebra, let $I_2\in\mathbb R^{2\times2}$ be its identity matrix, and let
$\sigma_x=\left(\begin{smallmatrix}0&1\\1&0\end{smallmatrix}\right)$ and
$\sigma_z=\left(\begin{smallmatrix}1&0\\0&-1\end{smallmatrix}\right)$ be the
two real Pauli matrices. Write $\Tr$ for matrix trace. The rebit state space and
its Bloch disk are
\[
 \mathcal D_{\rm R}=\{\rho\in\mathbb R^{2\times2}:\rho=\rho^\top,
 \ \rho\succeq0,\ \Tr\rho=1\}\subset\mathbb M_2,
 \qquad
 \mathbb B_2=\{r\in\mathbb R^2:\|r\|_2\leq1\}.
\]
At each node, the operational state $\rho\in\mathcal D_{\rm R}$ is
\begin{equation}
 \rho=\frac12\left(I_2+c\sigma_x+o\sigma_z\right),
 \qquad r=(c,o)^\top\in\mathbb B_2,
 \qquad c,o\in\mathbb R.
 \label{eq:rebit-state}
\end{equation}
The readout \(o=\Tr(\sigma_z\rho)\in[-1,1]\) is the expressed opinion, and
\(c=\Tr(\sigma_x\rho)\in[-1,1]\) is the auxiliary context coordinate. Choose
two unit axes $n_A,n_B\in S^1:=\{n\in\mathbb R^2:\|n\|_2=1\}$ and, for a
message label $\ell\in\{A,B\}$, set
\begin{equation}
 n_A=\left(\frac35,\frac45\right)^\top,
 \qquad
 n_B=\left(\frac35,-\frac45\right)^\top,
 \qquad P_\ell=n_\ell n_\ell^\top\in\mathbb R^{2\times2}.
 \label{eq:axes}
\end{equation}
The vectors $n_A$ and $n_B$ are the two channel axes. Define
\begin{equation}
 \Pi_\ell^\pm=\frac12\left[I_2\pm
 n_\ell\mathbin{\cdot}(\sigma_x,\sigma_z)\right],
 \qquad
 \Dens_\ell(X)=\Pi_\ell^+X\Pi_\ell^+
                  +\Pi_\ell^-X\Pi_\ell^- ,
 \label{eq:complete-dephasing}
\end{equation}
where $n\mathbin{\cdot}(\sigma_x,\sigma_z)=n_1\sigma_x+n_2\sigma_z$ for
$n=(n_1,n_2)^\top\in\mathbb R^2$, and
$\Dens_\ell:\mathbb M_2\to\mathbb M_2$ is the complete-dephasing
superoperator. Writing $\Id:\mathbb M_2\to\mathbb M_2$ for the identity
superoperator, use the finite partial dephasing
\begin{equation}
 \mathcal E_\ell=\frac12\Id+\frac12\Dens_\ell.
 \label{eq:partial-dephasing}
\end{equation}

\paragraph{Exact order pulse from channels of finite strength.}
The maps \(\mathcal E_A\) and \(\mathcal E_B\) are unital CPTP maps that
preserve the real Bloch plane.  Their Bloch matrices are
\begin{equation}
 T_A=
 \begin{pmatrix}
 17/25&6/25\\[1mm]6/25&41/50
 \end{pmatrix},
 \qquad
 T_B=
 \begin{pmatrix}
 17/25&-6/25\\[1mm]-6/25&41/50
 \end{pmatrix}.
 \label{eq:TA-TB}
\end{equation}
They have the same eigenvalues \(1\) and \(1/2\), but
\begin{equation}
 T_BT_A-T_AT_B
 =\frac{42}{625}
 \begin{pmatrix}0&-1\\1&0\end{pmatrix}\neq0.
 \label{eq:commutator}
\end{equation}
For the common preparation
\begin{equation}
 r_{\rm pre}=\left(\frac{25}{84},0\right)^\top,
 \label{eq:preparation}
\end{equation}
the convention \(AB=\mathcal E_B\circ\mathcal E_A\) gives
\begin{equation}
 T_BT_Ar_{\rm pre}
 =\left(\frac{253}{2100},\frac1{100}\right)^\top,
 \qquad
 T_AT_Br_{\rm pre}
 =\left(\frac{253}{2100},-\frac1{100}\right)^\top.
 \label{eq:ordered-outputs}
\end{equation}
Thus the two orders have the same contextual endpoint and opposite opinion
endpoints.

\begin{proof}[CPTP verification and exact algebra]
The operators
\(\{I_2/\sqrt2,\Pi_\ell^+/\sqrt2,\Pi_\ell^-/\sqrt2\}\)
form a Kraus representation of \(\mathcal E_\ell\), since the sum of their
adjoint products is \(I_2\). Complete dephasing projects the Bloch vector onto
\(n_\ell\), so \(T_\ell=(I_2+P_\ell)/2\). Substitution of
\eqref{eq:axes} gives \eqref{eq:TA-TB}, and exact multiplication gives
\eqref{eq:commutator} and \eqref{eq:ordered-outputs}.  The preparation norm is
\(25/84<1\), while either output has squared norm
\begin{equation}
 \left(\frac{253}{2100}\right)^2+\left(\frac1{100}\right)^2
 =\frac{1289}{88200}<1.
 \label{eq:output-norm}
\end{equation}
Hence every displayed state lies strictly inside the Bloch disk.  The
transverse contraction \(1/2\) corresponds to the finite dephasing strength
\(\log2\) in a semigroup parameterization.
\end{proof}

Both channels are unital and strength matched, so no affine translation
creates the sign in \eqref{eq:ordered-outputs}. Also, for every $n\in S^1$,
\(\Dens_n=\Dens_{-n}\), which prevents the written sign of an axis from
encoding a preferred opinion.

\section{Population and parity-twirl controls}
\label{sec:controls}

Let $\Delta_Z:\mathbb R^2\to\mathbb R^2$ be the population projection
\(\Delta_Z(c,o)=(0,o)\), the Bloch action of complete dephasing in the
declared opinion basis.  Restricting the whole local model to its population
sector means applying
\(\Delta_ZT_\ell\Delta_Z\), including the initial projection.  Exact matrix
multiplication gives
\begin{equation}
 \Delta_ZT_A\Delta_Z=\Delta_ZT_B\Delta_Z
 =\begin{pmatrix}0&0\\0&41/50\end{pmatrix},
 \qquad
 \Delta_Zr_{\rm pre}=0.
 \label{eq:population-control}
\end{equation}
Both ordered population-sector compositions therefore end at zero and have
zero order contrast.  A projection inserted only after the first full update
is not this restriction, because that first update has already transferred
information from the contextual preparation into the opinion coordinate.

A second control preserves the original preparation. Let
\begin{equation}
 \mathcal Z:\mathbb M_2\to\mathbb M_2,
 \qquad \mathcal Z(X)=\sigma_zX\sigma_z,
 \qquad
 \overline{\mathcal E}_\ell
 =\frac12\left(\mathcal E_\ell+
 \mathcal Z\circ\mathcal E_\ell\circ\mathcal Z\right).
 \label{eq:parity-twirl}
\end{equation}
Conjugation by \(\sigma_z\) sends \((c,o)\) to \((-c,o)\).  Consequently,
\begin{equation}
 \overline T_A=\overline T_B
 =\operatorname{diag}\left(\frac{17}{25},\frac{41}{50}\right),
 \label{eq:twirled-matrix}
\end{equation}
and both orders applied to the unchanged preparation give
\begin{equation}
 \overline T_B\overline T_Ar_{\rm pre}
 =\overline T_A\overline T_Br_{\rm pre}
 =\left(\frac{289}{2100},0\right)^\top.
 \label{eq:twirled-output}
\end{equation}
The full model's contrast therefore requires the intermediate contextual
coordinate relative to the fixed preparation and readout basis. The conclusion
is restricted to this basis; nonclassicality that is independent of basis and
comparisons with general classical models with latent states lie outside these
controls.

\section{Nonlinear network flow that preserves density matrices}
\label{sec:flow}

For \(s\in\{-1,+1\}\), define
\begin{equation}
 \Omega_s=\frac12(I_2+s\sigma_z)\in\mathcal D_{\rm R},
 \qquad \Omega_0=\frac {I_2}2\in\mathcal D_{\rm R}.
 \label{eq:reset-generator}
\end{equation}
For \(s\in\{-1,0,+1\}\), let
$\mathcal L_s:\mathbb M_2\to\mathbb M_2$ be the linear replacement generator
\(\mathcal L_s(X)=\Tr(X)\Omega_s-X\) for $X\in\mathbb M_2$.
Each \(\mathcal L_s\) generates the replacement semigroup
\(X\mapsto e^{-t}X+(1-e^{-t})\Tr(X)\Omega_s\) for $t\in[0,\infty)$, a special case of a GKLS
generator~\cite{GoriniEtAl1976,Lindblad1976}.
For a fixed parameter \(\beta\in[1,\infty)\) that controls stance sharpening, define the
probability kernel
$q_\beta:\{-1,+1\}\times[-1,1]\to[0,1]$ by
\begin{equation}
 q_\beta(s\mid o)
 =\frac{(1+so)^\beta}{(1+o)^\beta+(1-o)^\beta},
 \qquad -1\leq o\leq1,
 \label{eq:sender-probability}
\end{equation}
so that $\sum_{s=\pm1}q_\beta(s\mid o)=1$. Its mean-stance map
$g_\beta:[-1,1]\to[-1,1]$ is
\begin{equation}
 g_\beta(o)=\sum_{s=\pm1}s q_\beta(s\mid o)
 =\tanh\!\left(\beta\operatorname{artanh}o\right),
 \qquad g_\beta(\pm1)=\pm1.
 \label{eq:transmitted-stance}
\end{equation}
The endpoint values in \eqref{eq:transmitted-stance} are continuous
extensions.

Let $N\in\mathbb N$ be the number of nodes, use
$i,j\in\{1,\ldots,N\}$ as node indices, and let
$W=(w_{ij})\in\mathbb R_{\geq0}^{N\times N}$ be a finite nonnegative influence
matrix. Here $w_{ij}$ is the influence weight from node $j$ to node $i$ and
\(d_i=\sum_jw_{ij}\in[0,\infty)\) is the weighted degree. Let
\(\alpha,\kappa\in[0,\infty)\) be the rate of influence from neighbors and the
rate of relaxation to the neutral state. For continuous time
$t\in[0,\infty)$, the propagation after the kick is
\begin{equation}
 \dot\rho_i
 =\alpha\sum_jw_{ij}\sum_{s=\pm1}
 q_\beta(s\mid o_j)\mathcal L_s(\rho_i)
 +\kappa\mathcal L_0(\rho_i).
 \label{eq:N1}
\end{equation}
The state dependence in \(q_\beta\) makes \eqref{eq:N1} a nonlinear global
flow, so the linear CPTP classification does not apply. At each fixed global
state, the receiver dynamics is a nonnegative mixture of reset generators.
The result that preserves states does not require $W$ to be symmetric; the main
text later specializes to the undirected case $W=W^\top$ with $G=(V,W)$.

\paragraph{Global state preservation and exact Bloch semiconjugacy.}
Suppose \(w_{ij}\geq0\), \(\alpha,\kappa\geq0\), and \(\beta\geq1\).
For every collection \((\rho_1,\ldots,\rho_N)\) of rebit density matrices in
the Cartesian state space, \eqref{eq:N1} has a unique global forward solution.
Trace, Hermiticity, the real Bloch plane, and positivity are preserved at
every node. Its Bloch coordinates obey
\begin{align}
 \dot c_i&=-(\alpha d_i+\kappa)c_i,
 \label{eq:context-decay}\\
 \dot o_i&=\alpha\left[-d_i o_i+\sum_jw_{ij}g_\beta(o_j)\right]
            -\kappa o_i.
 \label{eq:opinion-flow}
\end{align}
Define the flow of the complete state
$\Phi_W^t:\mathcal D_{\rm R}^N\to\mathcal D_{\rm R}^N$, the opinion flow
$\phi_W^t:[-1,1]^N\to[-1,1]^N$ of \eqref{eq:opinion-flow}, and the readout
projection $\pi_o:\mathcal D_{\rm R}^N\to[-1,1]^N$ by
\[
 \pi_o(\rho_1,\ldots,\rho_N)
 =\bigl(\Tr(\sigma_z\rho_i)\bigr)_{i=1}^N.
\]
For a symmetric $W$ encoded by $G$, the main-text notation is
$\Phi_G^t:=\Phi_W^t$ and $\phi_G^t:=\phi_W^t$.
Then
\begin{equation}
\pi_o\circ\Phi_W^t=\phi_W^t\circ\pi_o.
\label{eq:semiconjugacy}
\end{equation}

\begin{proof}[Verification]
Put \(\gamma_i=\alpha d_i+\kappa\in[0,\infty)\), the total replacement rate
at node $i$. When \(\gamma_i>0\), write
$\boldsymbol\rho=(\rho_j)_{j=1}^N\in\mathcal D_{\rm R}^N$ for the network state
and define $\Theta_i:\mathcal D_{\rm R}^N\to\mathcal D_{\rm R}$ by
\begin{equation}
 \Theta_i(\boldsymbol\rho)=\frac{
 \alpha\sum_jw_{ij}\sum_{s=\pm1} q_\beta(s\mid o_j)\Omega_s
 +\kappa\Omega_0}{\gamma_i}.
 \label{eq:theta-target}
\end{equation}
The coefficients in \eqref{eq:theta-target} are nonnegative and sum to one,
so \(\Theta_i(\boldsymbol\rho)\) is a density matrix. Equation \eqref{eq:N1} becomes
\(\dot\rho_i=\gamma_i(\Theta_i(\boldsymbol\rho)-\rho_i)\), and variation of constants
gives
\begin{equation}
 \rho_i(t)=e^{-\gamma_i t}\rho_i(0)
 +\int_0^t\gamma_i e^{-\gamma_i(t-s)}
 \Theta_i(\boldsymbol\rho(s))\,ds.
 \label{eq:convex-variation}
\end{equation}
The total scalar weight in \eqref{eq:convex-variation} is one.  If
\(\gamma_i=0\), all rates at node \(i\) vanish and \(\dot\rho_i=0\).

For fixed \(\beta\geq1\), \(q_\beta\) is Lipschitz on \([-1,1]\).  Take a
Lipschitz extension to an open neighborhood of that interval.  Standard local
existence and uniqueness apply to the extended vector field.  Up to any first
exit from the set of density states, \eqref{eq:convex-variation} remains a convex
combination of density matrices, which rules out such an exit.  Compactness
then yields global forward existence.  Taking traces against \(\sigma_x\) and
\(\sigma_z\) gives \eqref{eq:context-decay} and
\eqref{eq:opinion-flow}, which proves \eqref{eq:semiconjugacy}.
\end{proof}

Let $D=\operatorname{diag}(d_1,\ldots,d_N)\in\mathbb R^{N\times N}$ be the
degree matrix, $L=D-W\in\mathbb R^{N\times N}$ the network Laplacian,
$I_N\in\mathbb R^{N\times N}$ the identity matrix, and
$o=(o_i)_{i=1}^N\in[-1,1]^N$ the opinion vector. With $g_\beta$ acting
componentwise,
the opinion equation can also be written
\begin{equation}
 \dot o=-(\alpha L+\kappa I_N)o+\alpha W\bigl(g_\beta(o)-o\bigr).
 \label{eq:laplacian-form}
\end{equation}
For either registered 6-regular graph $G$ introduced in Section~S4, let
$A_G\in\{0,1\}^{16\times16}$ be its adjacency matrix. Then
\(\alpha=\kappa=1\) and
\(\beta=6/5\) reduce the exact opinion coordinate to
\begin{equation}
 \dot o=-7o+A_Gg_{6/5}(o),
 \qquad c_i(t)=e^{-7t}c_i(0).
 \label{eq:registered-flow}
\end{equation}
The complete model applies the atomic local channel map at \(t=0\) and the
autonomous flow \eqref{eq:N1} for \(t>0\). Each atomic block contains exactly
the two channel updates; the common network flow begins only after the block
and alone propagates along edges.

\section{Cospectral graphs and the crossed order pulse}
\label{sec:graphs}

Use the common labelled vertex set \(V=\mathbb Z_4^2\), with all coordinates
taken modulo four, so $N=|V|=16$. Let $G\in\{\Rook,\Shri\}$ denote the rook or
Shrikhande graph, let $A_G\in\{0,1\}^{16\times16}$ be its adjacency matrix,
let $I_{16}$ be the identity on $\mathbb R^{16}$, and write
$\one=(1,\ldots,1)^\top\in\mathbb R^{16}$ and
$J=\one\one^\top\in\mathbb R^{16\times16}$. The graph Laplacian is
$L_G=6I_{16}-A_G$. The rook graph joins distinct vertices in the same row or
column. The Shrikhande graph uses the steps
\begin{equation}
 (\pm1,0),\quad(0,\pm1),\quad(1,1),\quad(-1,-1).
 \label{eq:shrikhande-steps}
\end{equation}
Their adjacency matrices satisfy, by exact integer arithmetic,
\begin{equation}
 A_G\one=6\one,
 \qquad A_G^2=4I_{16}+2J.
 \label{eq:strongly-regular-identity}
\end{equation}
It follows that
\begin{equation}
 \operatorname{spec}(A_G)=\{6^{(1)},2^{(6)},(-2)^{(9)}\},
 \qquad
 \operatorname{spec}(L_G)=\{0^{(1)},4^{(6)},8^{(9)}\}.
 \label{eq:common-spectrum}
\end{equation}
The graphs are not isomorphic: their clique numbers are four and three,
respectively.  This pair is a standard example of nonisomorphic cospectral
graphs~\cite{vanDamHaemers2003}.

Let \(e_{ij}\in\mathbb R^{16}\) be the standard basis vector at vertex
\((i,j)\in V\), and, for a deformation parameter $\tau\in\mathbb R$, define
\begin{align}
 p={}&(e_{01}-e_{23})+2(e_{03}-e_{21})+3(e_{10}-e_{32})\notag\\
    &+4(e_{12}-e_{30})+5(e_{13}-e_{31}),
 \label{eq:p-vector}\\
 z_\tau={}&e_{00}-e_{22}+\tau p,
 \qquad
 v_T=e_{11}+e_{33}-e_{02}-e_{20}.
 \label{eq:center-source}
\end{align}
Here $p\in\mathbb R^{16}$ specifies the deformation of the center,
$z_\tau\in\mathbb R^{16}$ is the zero-sum background pattern, and
$v_T\in\mathbb R^{16}$ is the signed order pulse supported on four nodes.
The supports of $p$, $e_{00}-e_{22}$ and $v_T$ are pairwise disjoint, and
\begin{equation}
 \|p\|_2^2=110,
 \quad \|z_\tau\|_2^2=2+110\tau^2,
 \quad \|v_T\|_2^2=4,
 \quad z_\tau^\top v_T=0.
 \label{eq:source-norms}
\end{equation}
In particular, every active coordinate in \(\supp(v_T)\) has zero center
opinion. Let $P_T\in\mathbb R^{16\times16}$ be the permutation matrix on node
vectors induced by translation through $(2,2)$ modulo $4$, and define the
signed symmetry \(\mathcal S=-P_T\). Then both graphs obey
\begin{equation}
 \mathcal Sz_\tau=z_\tau,
 \qquad \mathcal Sv_T=-v_T,
 \qquad \mathcal S\one=-\one.
 \label{eq:signed-symmetry}
\end{equation}

\paragraph{Certificate excluding a signed coordinate reduction.}
For every $0<\tau\leq1/100$, the joint signed coordinate signatures of
$(z_\tau,v_T)$ have seven initial cells. Signed equitable refinement by
signatures of neighbor counts has 12 cells after one step and 16 singleton cells
after the second step on both graphs. Hence no proper signed coordinate
polydiagonal contains the registered initial family. A full signed-coordinate
permutation conjugacy between the two graph systems is also impossible.

\begin{proof}[Exact refinement check]
Every coordinate signature is affine in $\tau$. Exact rational comparison
shows that no coordinate signature becomes zero for positive $\tau$, and that
two signatures coincide only at
$1/5,1/4,1/3,1/2,1$, all outside the registered interval. The initial signed
partition is therefore constant on $0<\tau\leq1/100$. Exact integer
refinement by neighbor counts gives the following cell sizes:
\[
 (4,2,2,2,2,2,2),
 \quad
 (2,2,2,2,1,1,1,1,1,1,1,1),
 \quad
 (1,1,1,1,1,1,1,1,1,1,1,1,1,1,1,1)
\]
for both graphs. The final profile is discrete, which excludes a proper signed
coordinate quotient. A full signed-coordinate permutation would conjugate the
two adjacency matrices and hence give a graph isomorphism, which is excluded
by their clique numbers four and three.
\end{proof}

Let $\varrho\in[0,1]$ be the amplitude of the background state,
$\tau\in\mathbb R$ the parameter that deforms its center and
$\mu\in[0,\infty)$ the pulse amplitude. Fix their registered values
\begin{equation}
 \varrho=\frac45,
 \qquad \tau=\frac1{100},
 \qquad \mu=\frac1{100}.
 \label{eq:registered-parameters}
\end{equation}
Before the kick, each active node has the common preparation
\eqref{eq:preparation}; every inactive node has Bloch vector
\((0,\varrho z_{\tau,i})^\top\).  This nodewise preparation profile is the
same on both graphs and in both global arms.  In arm \(\eta\in\{-1,+1\}\), an
active node receives \(AB\) when \(\eta v_{T,i}=+1\) and \(BA\) when
\(\eta v_{T,i}=-1\).  Each arm therefore has two blocks of each order, and every
active node receives exactly the multiset \(\{A,B\}\).

Let
$\mathbf R=(\rho_i)_{i\in V}\in\mathcal D_{\rm R}^{16}$ denote the full
network state, and let
$K_\eta:\mathcal D_{\rm R}^{16}\to\mathcal D_{\rm R}^{16}$ denote the
synchronous atomic map comprising four blocks. No network
evolution, readout, adaptation, or extra update occurs anywhere during that
indivisible kick. Equations
\eqref{eq:ordered-outputs} and \eqref{eq:center-source} give the exact state
after the kick. Write
$o^{(\eta)}(0^+)=(o_i^{(\eta)}(0^+))_{i\in V}\in[-1,1]^{16}$ for its opinion
vector. Then
\begin{equation}
 \mathbf R(0^+)=K_\eta\mathbf R(0^-),
 \qquad
 o^{(\eta)}(0^+)=\varrho z_\tau+\eta\mu v_T,
 \label{eq:post-kick-opinion}
\end{equation}
Here $0^-$ and $0^+$ are the instants immediately before and after the kick.
The autonomous network clock is restarted after the kick, so
$o^{(\eta)}(0):=o^{(\eta)}(0^+)\in[-1,1]^{16}$.
The context coordinates at the restarted time are
\begin{equation}
 c_i(0)=
 \begin{cases}
 253/2100,&i\in\supp(v_T),\\
 0,&i\notin\supp(v_T).
 \end{cases}
 \label{eq:post-kick-context}
\end{equation}
All node states remain inside the Bloch disk.  The graph, labels, preparation
profile, channel pair, channel strengths, active roles, and two crossed
schedules are fixed before the graph-dependent flow begins.
The opinion perturbation \(\eta\mu v_T\) is balanced, whereas the nonzero
contextual component in \eqref{eq:post-kick-context} is identical in both arms.

\section{Local projector mechanism with a bound for finite pulses}
\label{sec:local}

This section proves the small-amplitude theorem corresponding to Proposition~1
in the main text; Section~S6 gives the independent certificate at moderate
amplitude. We reserve $\rho_i$ for density matrices. In this section,
$\varepsilon>0$ scales the background center and $\delta>0$ is the magnitude
of the positive order pulse; a separate sign selects its direction. Write
$x=o\in[-1,1]^{16}$ for the local opinion vector and define
$F_G:[-1,1]^{16}\to\mathbb R^{16}$ by
\begin{equation}
 F_G(x)=-7x+A_Gg(x),
 \qquad
 g(s)=\tanh\!\left(\frac65\operatorname{artanh}s\right)\quad(|s|<1),
 \qquad g(\pm1)=\pm1,
 \qquad u=\frac{\one}{4}.
 \label{eq:local-vector-field}
\end{equation}
Here $g:[-1,1]\to[-1,1]$ uses the displayed continuous endpoint values and
acts componentwise, while
$u\in\mathbb R^{16}$ is the unit consensus vector. Set the consensus and
transverse projectors $P=uu^\top\in\mathbb R^{16\times16}$ and
\(Q=I_{16}-P\). The origin has unstable consensus rate
\(\lambda_u=1/5\), while the Laplacian modes \(4\) and \(8\) have rates
\begin{equation}
 \sigma_4=-\frac{23}{5},
 \qquad \sigma_8=-\frac{47}{5}.
 \label{eq:stable-rates}
\end{equation}
Standard theory of invariant manifolds gives a local stable graph near the origin
\cite{HirschPughShub1977}. Thus, for a neighborhood
$U_G\subset u^\perp:=\{z\in\mathbb R^{16}:u^\top z=0\}$, there is an odd map
$h_G:U_G\to\mathbb R$ with homogeneous cubic term
$h_{3,G}:u^\perp\to\mathbb R$ such that
\begin{equation}
 x=h_G(z)u+z,
 \qquad z\perp u,
 \qquad h_G(z)=h_{3,G}(z)+O(\|z\|_2^5).
 \label{eq:stable-graph}
\end{equation}

\begin{lemma}[Validated Lyapunov--Perron chart]
\label{lem:LP-chart}
Let \(R_{\rm LP}=10^{-3}\) be the Lyapunov--Perron state radius and let
\(\gamma=1\) be the exponential weight. Define the Banach space
\[
 \mathcal X_\gamma=\left\{X\in C([0,\infty),\mathbb R^{16}):
 \|X\|_\gamma:=\sup_{t\geq0}e^{\gamma t}\|X(t)\|_2<\infty\right\}.
\]
For operator-valued trajectories, $\|\cdot\|_\gamma$ denotes the analogous
weighted supremum of the induced Euclidean operator norm.
Write \(F_G(x)=\mathcal L_Gx+N_G(x)\), where
\(\mathcal L_G=-7I_{16}+(6/5)A_G\in\mathbb R^{16\times16}\) is the
linearization at the origin and
$N_G:[-1,1]^{16}\to\mathbb R^{16}$ is the nonlinear remainder
\(N_G(x)=A_G[g(x)-(6/5)x]\). Uniformly for the two graphs and
\(\|x\|_2\leq R_{\rm LP}\),
\begin{equation}
 \|N_G(x)\|_2\leq1.056007\|x\|_2^3,
 \quad
 \|DN_G(x)\|_2\leq3.168035\|x\|_2^2,
 \quad
 \|DN_{5,G}(x)\|_2\leq35\|x\|_2^4,
 \label{eq:nonlinear-majorants}
\end{equation}
where \(N_{5,G}\) is the remainder after the cubic term, restricted here to
the ball $\{x\in\mathbb R^{16}:\|x\|_2\leq R_{\rm LP}\}$. The
Lyapunov--Perron map on \(\mathcal X_\gamma\) is a contraction with constant
\(q_{\rm LP}<2.785\times10^{-6}\). Hence, with the inverse contraction margin
\begin{equation}
 B=(1-q_{\rm LP})^{-1}<1.000003,
 \qquad
 r_{\rm chart}=\frac{R_{\rm LP}}{B},
 \label{eq:chart-radius}
\end{equation}
the stable graph is defined for \(\|z\|_2\leq r_{\rm chart}\) and satisfies
\begin{equation}
 \|Dh_G(z)-Dh_{3,G}(z)\|_2
 \leq C_4\|z\|_2^4,
 \qquad C_4<10.387<11.
 \label{eq:graph-remainder}
\end{equation}
\end{lemma}

\begin{proof}
On the interval $s\in[-R_{\rm LP},R_{\rm LP}]$, direct differentiation of the
scalar map $g$ in \eqref{eq:local-vector-field} gives
\(\sup|g^{(5)}(s)|<140\). Taylor's theorem, \(\|A_G\|_2=6\), and
Hadamard-product bounds give \eqref{eq:nonlinear-majorants}. Let
\(\nu=23/5\) be the slowest transverse decay rate. For a positive integer $m$
and forcing that decays as
\(e^{-m\gamma t}\), the unstable and stable convolutions have the common
bound
\[
 K_m=\left[(\lambda_u+m\gamma)^{-2}
 +(\nu-m\gamma)^{-2}\right]^{1/2}.
\]
Here \(K_1<0.879\), \(K_3<0.699\), and
\(K_1(3.168035R_{\rm LP}^2)<2.785\times10^{-6}\). The contraction theorem therefore
gives \eqref{eq:chart-radius}. For $z\in u^\perp$ in the chart, let
$X_z\in\mathcal X_\gamma$ be the nonlinear Lyapunov--Perron fixed trajectory
with stable initial coordinate $QX_z(0)=z$, let
$X_z^0(t)=e^{(Q\mathcal L_GQ)t}z$ be its linear stable comparison trajectory,
and let $D_z$ denote the Fr\'echet derivative with respect to $z$.
Differentiating the fixed-point equation gives
\(\|X_z-X_z^0\|_\gamma<0.739\|z\|_2^3\) and
\(\|D_zX_z-D_zX_z^0\|_\gamma<2.215\|z\|_2^2\). Subtracting the cubic graph
derivative leaves the fifth-order term and two fixed-point perturbation terms,
so
\[
 C_4\leq
 \frac{35B^5}{\lambda_u+5\gamma}
 +\frac{3.168(B+1)(0.739)B}{\lambda_u+3\gamma}
 +\frac{3.168(2.215)}{\lambda_u+3\gamma}
 <10.387.
\]
This proves \eqref{eq:graph-remainder} without a cutoff outside the stated
ball.
\end{proof}

The complete Laplacian eigenspace projectors
$P_4^G,P_8^G\in\mathbb R^{16\times16}$ are
\begin{equation}
 P_4^G=\frac{A_G+2I_{16}}{4}-\frac J8,
 \qquad
 P_8^G=I_{16}-\frac J{16}-P_4^G.
 \label{eq:projectors}
\end{equation}
Since \(g(s)=(6/5)s-(22/125)s^3+O(s^5)\), the cubic graph is
\begin{equation}
 h_{3,G}(z)=\frac{132}{125}
 \sum_{\lambda_a,\lambda_b,\lambda_c\in\{4,8\}}
 \frac{u^\top\!\left[(P_{\lambda_a}^Gz)\circ(P_{\lambda_b}^Gz)
 \circ(P_{\lambda_c}^Gz)\right]}
 {\lambda_u-\sigma_{\lambda_a}-\sigma_{\lambda_b}-\sigma_{\lambda_c}},
 \label{eq:h3-projector}
\end{equation}
where \(\circ\) denotes componentwise multiplication.  The four denominator
values are
\begin{equation}
 D_{444}=14,
 \qquad D_{448}=\frac{94}{5},
 \qquad D_{488}=\frac{118}{5},
 \qquad D_{888}=\frac{142}{5}.
 \label{eq:denominators}
\end{equation}
Equation \eqref{eq:h3-projector} is invariant under a basis change within a
repeated eigenspace.  The shared eigenvalues fix its denominators, while the
labelled projectors fix the triple overlaps in the numerators.

The symmetry \eqref{eq:signed-symmetry} and uniqueness of the stable graph
imply
\begin{equation}
 h_G(\varepsilon z_\tau)=0
 \quad\text{whenever }\varepsilon z_\tau\text{ lies in the chart}.
 \label{eq:center-on-separatrix}
\end{equation}
Define the scalar local crossing coefficient
$K_G:(0,1/100]\to\mathbb R$ by
\begin{equation}
 K_G(\tau)=-Dh_{3,G}(z_\tau)v_T.
 \label{eq:KG-definition}
\end{equation}
Exact projector contractions give
\begin{align}
 K_\Rook(\tau)
 ={}&\frac{1604394}{172272625}
 +\frac{755568}{172272625}\tau
 -\frac{11806938}{34454525}\tau^2,
 \label{eq:KR}\\
 K_\Shri(\tau)
 ={}&-\frac{1604394}{172272625}
 -\frac{755568}{172272625}\tau
 +\frac{2210274}{6890905}\tau^2.
 \label{eq:KS}
\end{align}
For \(0<\tau\leq1/100\),
\begin{equation}
 K_\Rook(\tau)>\frac{23}{2500},
 \qquad
 K_\Shri(\tau)<-\frac{23}{2500}.
 \label{eq:K-margin}
\end{equation}

A theorem for a finite pulse also requires control of the displacement and the
higher-order remainder. Lemma~\ref{lem:LP-chart} supplies the certified chart
radius and derivative remainder. Let
$\mathcal U_G=\{x\in\mathbb R^{16}:Qx\in\operatorname{dom}(h_G)\}$ and define
$\Psi_G:\mathcal U_G\to\mathbb R$ together with three uniform constants by
\begin{equation}
 \Psi_G(x)=u^\top x-h_G(Qx),
 \qquad
 \overline Z=\frac{1419}{1000},
 \qquad \overline C=2,
 \qquad B_3=\frac{66}{875}.
 \label{eq:Psi-constants}
\end{equation}
Here
$\overline Z\geq\sup_{0<\tau\leq1/100}\|z_\tau\|_2$,
$\overline C=\|v_T\|_2$, and $B_3$ bounds the operator norm of the symmetric
trilinear form associated with the homogeneous cubic $h_{3,G}$, uniformly over
$G\in\{\Rook,\Shri\}$.
For the local pulse sign \(\varsigma\in\{-1,+1\}\), \(0<\tau\leq1/100\),
\(0<\varepsilon\leq1/2000\), and
\(0<\delta\leq\varepsilon/1000\), the complete finite-pulse
remainder obeys
\begin{align}
 &\left|\Psi_G(\varepsilon z_\tau+\varsigma\delta v_T)
 -\varsigma\delta\varepsilon^2K_G(\tau)\right|\notag\\
 &\quad\leq\delta\varepsilon^2\left[
 3B_3\overline Z\overline C^2\frac1{1000}
 +B_3\overline C^3\frac1{10^6}
 +11\overline C\left(\overline Z+\frac{\overline C}{1000}\right)^4
 \frac1{2000^2}\right]\notag\\
 &\quad\leq C_{\rm rem}\delta\varepsilon^2,
 \qquad
 C_{\rm rem}=\frac{18303970734638237}{14000000000000000000}
 <0.001308.
 \label{eq:finite-pulse-remainder}
\end{align}
This bound contains both the nonlinear displacement in \(\delta\) and the
\(h_G-h_{3,G}\) remainder.

\begin{lemma}[Normal-coordinate continuation and orthant entry]
\label{lem:continuation}
Let $G\in\{\Rook,\Shri\}$, let $\varsigma\in\{-1,+1\}$ denote the local pulse
sign, and
take
\[
 x(0)=\varepsilon z_\tau+\varsigma\delta v_T,
 \qquad 0<\tau\leq\frac1{100},\quad
 0<\varepsilon\leq\frac1{2000},\quad
 0<\delta\leq\frac{\varepsilon}{1000}.
\]
For any of the four initial points obtained by pairing each graph with each
pulse sign, write
\[
 x=(h_G(z)+w)u+z,
 \qquad w=\Psi_G(x).
\]
Here $z=Qx\in u^\perp$ is the transverse coordinate and $w\in\mathbb R$ is
the signed normal coordinate. Until it reaches the exit threshold
\(|w|=w_E:=1/4000\), the solution remains in the chart and has an effective
scalar normal growth rate $\Lambda_G(w,z)\in\mathbb R$ satisfying
\begin{equation}
 \dot w=\Lambda_G(w,z)w,
 \qquad
 0.1999968<\Lambda_G(w,z)<0.2000032.
 \label{eq:normal-growth}
\end{equation}
If \(w(0)\neq0\), its sign is preserved and it reaches
\(|w|=w_E\) at a finite time \(T_E>50\). At that time every coordinate has
the sign of \(w\), with componentwise margin greater than
\(6.24996\times10^{-5}\).
\end{lemma}

\begin{proof}
Invariance of the stable graph gives, with
\(x_g=h_G(z)u+z\) and \(\Delta N=N_G(x)-N_G(x_g)\),
\[
 \dot w=\lambda_uw+u^\top\Delta N-Dh_G(z)Q\Delta N.
\]
The cubic graph bound and \eqref{eq:graph-remainder} imply
\[
 \|Dh_G(z)\|_2
 \leq3(66/875)R_{\rm LP}^2+11R_{\rm LP}^4
 <2.263\times10^{-7}.
\]
Combining this with \eqref{eq:nonlinear-majorants} and dividing by nonzero
\(w\) gives \eqref{eq:normal-growth}; continuity supplies the value at
\(w=0\).

Let \(\nu=23/5\). Variation of constants in the transverse equation and the
same nonlinear majorant give
\[
 \|z(t)\|_2
 \leq e^{-\nu t}\|z(0)\|_2+2.296\times10^{-10},
 \qquad
 |h_G(z(t))|<3.31\times10^{-10}.
\]
Together with the initial bound
\(\|z(0)\|_2\leq(1419/1000+2/1000)/2000\), these inequalities yield
\[
 \|x(t)\|_2<0.000960500561<R_{\rm LP},
 \qquad
 \|z(t)\|_2<r_{\rm chart}
\]
as long as \(|w|<w_E\). Thus neither chart boundary can terminate the
bootstrap. Equation \eqref{eq:normal-growth} preserves the sign of \(w\) and
forces finite exit. Moreover,
\(|w(0)|<2.27\times10^{-13}\), while
\(e^{0.2000032\times50}<10^9<w_E/|w(0)|\), so \(T_E>50\).
At exit, \(e^{-\nu T_E}<e^{-230}<10^{-60}\), and therefore
\[
 \frac{w_E}{4}-\|z(T_E)\|_\infty
 -\frac{|h_G(z(T_E))|}{4}
 >6.24996\times10^{-5}.
\]
This proves strict orthant entry with the sign of \(w\).
\end{proof}

\subsection*{Completion of the small-radius proof}
For every
\begin{equation}
 0<\tau\leq\frac1{100},
 \qquad 0<\varepsilon\leq\frac1{2000},
 \qquad 0<\delta\leq\frac{\varepsilon}{1000},
 \label{eq:small-box}
\end{equation}
the scalar flow \eqref{eq:local-vector-field} has the limits
\begin{equation}
\begin{array}{c|cc}
 &\varepsilon z_\tau+\delta v_T&\varepsilon z_\tau-\delta v_T\\
\hline
\Rook&+q_\star\one&-q_\star\one\\
\Shri&-q_\star\one&+q_\star\one
\end{array}
\label{eq:small-winner-table}
\end{equation}
where \(q_\star\in(0,1)\) is the unique positive solution of
\(7q=6g(q)\).

Equations \eqref{eq:K-margin} and \eqref{eq:finite-pulse-remainder} give
\begin{equation}
 |\Psi_G(\varepsilon z_\tau\pm\delta v_T)|
 >0.0078\,\delta\varepsilon^2
 \label{eq:Psi-margin}
\end{equation}
with the sign table \((+,-)\) for the rook graph and \((-,+)\) for the
Shrikhande graph. Lemma~\ref{lem:continuation} preserves each sign and proves
strict entry into the corresponding orthant. The system is cooperative, and
comparison of the scalar minimum and maximum then gives the two consensus limits
\cite{Smith1995}. The exit time is finite for every \(\delta>0\), but it is not
uniform as \(\delta\downarrow0\).

\section{Independent interval certificate at moderate amplitude}
\label{sec:moderate}

The registered operational construction uses
\((\varrho,\tau,\mu)=(4/5,1/100,1/100)\), where the three parameters are the
background amplitude, center deformation and pulse amplitude defined in
Section~S4. Because this point lies far outside the local box
\eqref{eq:small-box}, we certify its winner table independently by interval
arithmetic rather than extrapolate \eqref{eq:h3-projector} or
\eqref{eq:finite-pulse-remainder}.

For $G\in\{\Rook,\Shri\}$ and arm sign $\eta\in\{-1,+1\}$, let
$o_G^{(\eta)}:[0,\infty)\to[-1,1]^{16}$ be the registered opinion trajectory defined
by
\begin{equation}
 \dot o_G^{(\eta)}(t)=F_G(o_G^{(\eta)}(t)),
 \qquad
 o_G^{(\eta)}(0)=\varrho z_\tau+\eta\mu v_T.
 \label{eq:registered-trajectories}
\end{equation}
The notation $o_G^\pm$ refers to $\eta=\pm1$, and $T=3$ is the certified finite
time horizon.

\begin{lemma}[Validated orthant certificate at finite amplitude]
\label{lem:moderate-certificate}
For the registered trajectories in \eqref{eq:registered-trajectories},
\[
 \min_i o_{\Rook,i}^{(+)}(3)>2.142125325208\times10^{-5},
 \qquad
 \max_i o_{\Shri,i}^{(+)}(3)<-2.142758397374\times10^{-5}.
\]
The signed symmetry gives the reversed inequalities for $\eta=-1$. The same
four winner signs hold throughout the nonzero parameter box
\eqref{eq:moderate-box}, with a simultaneous coordinate margin greater than
$3.19\times10^{-6}$.
\end{lemma}

\begin{proof}
First, \(A_G^2=4I_{16}+2J\) implies
\(\|A_G|_{\one^\perp}\|_2=2\).  The global slope bound
\(0\leq g'(s)\leq6/5\) for $s\in[-1,1]$ implies
\[
 \|Qg(x)\|_2
 =\|Q\{g(x)-g((\one^\top x/16)\one)\}\|_2
 \leq\frac65\|Qx\|_2.
\]
Together with the transverse adjacency norm, this gives
\begin{equation}
 \frac{d}{dt}\frac12\|Qx(t)\|_2^2
 \leq-\frac{23}{5}\|Qx(t)\|_2^2.
 \label{eq:transverse-contraction}
\end{equation}
At \(T=3\), this yields the certified transverse bound
\begin{equation}
 \|Qo_G^\pm(3)\|_2<1.2\times10^{-6}.
 \label{eq:transverse-endpoint}
\end{equation}

The enclosure over the finite time interval uses the state envelope radius
$R_{\rm enc}:=7/8$ and the exact algebraic relation
\begin{equation}
 (1+y)^5(1-\xi)^6-(1-y)^5(1+\xi)^6=0,
 \qquad \xi\in[-R_{\rm enc},R_{\rm enc}],\quad y=g(\xi)\in[-1,1],
 \label{eq:algebraic-g}
\end{equation}
and exact integer sign tests to validate every root bracket.  The integration
settings follow the principle of outward enclosure in validated numerics
\cite{Tucker2011} and are
\begin{equation}
 \Delta t=\frac1{1000},
 \qquad n_{\rm step}=3000,
 \label{eq:interval-settings}
\end{equation}
where $R_{\rm enc}$ is the state envelope bound
$\|o_G^{(\eta)}(t)\|_\infty<R_{\rm enc}$ for
$G\in\{\Rook,\Shri\}$ and $\eta\in\{-1,+1\}$, $\Delta t$ is the time step, and
$n_{\rm step}$ is the number of steps, so $n_{\rm step}\Delta t=T=3$.
For the four registered trajectories, one for each graph and arm, the Taylor
enclosure of degree four uses the derivative bound and local Taylor remainder
\begin{equation}
\begin{aligned}
M_5&:=\max_{G,\eta}\sup_{0\leq t\leq3}
\left\|\frac{d^5o_G^{(\eta)}(t)}{dt^5}\right\|_\infty
<1.397\times10^8,\\
R_5^{\rm Tay}&:=\frac{M_5(\Delta t)^5}{5!}<1.164\times10^{-9}.
\end{aligned}
\end{equation}

Let $E_k^{\rm num}\in[0,\infty)$ be the radius of the accumulated numerical
state error after $k$ steps and let $\omega_k^{\rm eval}\in[0,\infty)$ be the
outward radius for function evaluation at step $k$. Conversion of the exact
rational initial state to binary64 gives
$E_0^{\rm num}<4.45\times10^{-17}$. For
$k=0,\ldots,n_{\rm step}-1$, the checked recurrence is
\begin{equation}
 E_{k+1}^{\rm num}\leq\frac{5000}{4999}E_k^{\rm num}
 +R_5^{\rm Tay}+\omega_k^{\rm eval}.
 \label{eq:error-recurrence}
\end{equation}
The checker verifies $\omega_k^{\rm eval}<5.551\times10^{-16}$ at every step
and $E_{n_{\rm step}}^{\rm num}<4.783\times10^{-6}$ for both graphs.
For the positive-pulse arm, the endpoint enclosures are
\begin{equation}
\begin{array}{lccc}
\toprule
G&\text{mean magnitude lower bound}&\text{transverse norm upper bound}
 &\text{coordinate margin}\\
\midrule
\Rook&2.2621\times10^{-5}&1.2\times10^{-6}&2.1421\times10^{-5}\\
\Shri&2.2627\times10^{-5}&1.2\times10^{-6}&2.1427\times10^{-5}\\
\bottomrule
\end{array}
\label{eq:moderate-margins}
\end{equation}
The Shrikhande entries in \eqref{eq:moderate-margins} are magnitudes for its
negative mean and negative orthant.  Odd symmetry supplies the two
negative-pulse arms.  Strict orthant entry followed by cooperative comparison
proves the same table as \eqref{eq:small-winner-table} at the registered
moderate point.  The scalar enclosure also remains valid on the nonzero box
\begin{equation}
 \left[\frac{79999}{100000},\frac{80001}{100000}\right]_\varrho
 \times
 \left[\frac{4999}{500000},\frac1{100}\right]_\tau
 \times
 \left[\frac1{100},\frac{1001}{100000}\right]_\mu.
 \label{eq:moderate-box}
\end{equation}
Every point in this box starts within $10^{-5}$ in the sup norm of its
registered anchor. The flow amplifies this deviation by less than
$e^{(1/5)3}<1.823$ up to $T=3$, leaving a simultaneous strict coordinate
margin greater than $3.19\times10^{-6}$ on both graphs.
\end{proof}

\subsection*{Lift to the main winner theorem}
Apply the atomic crossed schedule in Section~S4, followed by
\eqref{eq:N1} with \(W=A_G\), \(\alpha=\kappa=1\), and \(\beta=6/5\).
For the registered parameters \eqref{eq:registered-parameters}, the node
density matrices converge as follows:
\begin{equation}
\begin{array}{c|cc}
 &\eta=+1&\eta=-1\\
\hline
\Rook&\tfrac12(I_2+q_\star\sigma_z)&\tfrac12(I_2-q_\star\sigma_z)\\[1mm]
\Shri&\tfrac12(I_2-q_\star\sigma_z)&\tfrac12(I_2+q_\star\sigma_z)
\end{array}
\label{eq:density-winner-table}
\end{equation}
at every node, where
\(q_\star\in(0,1)\) is the unique positive root of $7q=6g(q)$ and
$q_\star=0.4338978338\ldots$.
Lemma~\ref{lem:moderate-certificate} proves the scalar opinion table. The exact
semiconjugacy \eqref{eq:semiconjugacy} identifies those trajectories with the
opinion coordinate of the flow on density states. Equation
\eqref{eq:context-decay} gives \(c_i(t)=e^{-7t}c_i(0)\to0\), producing the
limits in \eqref{eq:density-winner-table}.

\paragraph{Macroscopic parity-twirl control.}
Replace both local channels by \eqref{eq:parity-twirl}, while retaining the
same nodewise preparations and crossed schedules.  Both arms then start the
network flow with the identical opinion vector
\((4/5)z_{1/100}\).  This vector lies in
\(\Fix(-P_T):=\{x\in\mathbb R^{16}:-P_Tx=x\}\subset\one^\perp\), and that
subspace is invariant. Inequality
\eqref{eq:transverse-contraction} therefore sends the opinion vector to zero.
The context coordinates also decay, so every node density matrix converges
to \(I_2/2\) in both arms.

\section{Proof of the open-neighborhood corollary for channel parameters}
\label{sec:robustness}

For a unit axis \(n\in S^1\) and a partial-dephasing fraction
\(\chi\in(0,1)\), write
\begin{equation}
 \mathcal E_{n,\chi}=(1-\chi)\Id+\chi\Dens_n.
 \label{eq:channel-family}
\end{equation}
Here $\Dens_n:\mathbb M_2\to\mathbb M_2$ is the complete-dephasing map
defined by \eqref{eq:complete-dephasing} with axis $n$.
Let $\mathring{\mathbb B}_2:=\{r\in\mathbb R^2:\|r\|_2<1\}$ be the open Bloch
disk. Consider the declared local parameter space
\begin{equation}
 \mathcal P=(S^1)^2\times(0,1)^2\times\mathring{\mathbb B}_2,
 \label{eq:parameter-space}
\end{equation}
whose entries are the two axes, the two dephasing fractions, and the common
preparation at the active nodes. The registered point is
$\theta_0=(n_A,n_B,1/2,1/2,r_{\rm pre})\in\mathcal P$.

\paragraph{Claim used in the main text.}
Keep the two labelled graphs, the crossed order schedules, and the nonlinear
flow fixed.  There is a relative open neighborhood
\(U\subset\mathcal P\) of \(\theta_0\) such that every
\(\theta\in U\) preserves all four entries of the winner table for density states
\eqref{eq:density-winner-table}.  After reducing \(U\) if necessary, the two
local channels remain noncommuting and their two ordered compositions retain
a nonzero opinion contrast on the common preparation.

\begin{proof}[Proof of the main-text corollary]
The strict decrease of \(g(q)/q\) at its positive crossing makes the
consensus eigenvalue at \(q_\star\) negative.  The two transverse adjacency
eigenvalues are \(2\) and \(-2\); since \(0\leq g'\leq6/5\), their linear rates
are also negative.  The contextual eigenvalues equal \(-7\).  Hence
the two nodewise tuples
\[
 \boldsymbol\rho_\pm^\star
 =\left(\frac12(I_2\pm q_\star\sigma_z)\right)_{i\in V}
 \in\mathcal D_{\rm R}^{16}
\]
are asymptotically stable equilibria of the complete flow and their
basins are open.  For each graph and arm, finite channel composition
and the embedding in Section~S4 depend continuously on
\(\theta\in\mathcal P\).  At \(\theta_0\), the resulting four initial states
belong to the four required open basins by
the table certified by the interval calculation
\eqref{eq:density-winner-table}. The intersection of their four basin
preimages is a relative open neighborhood of \(\theta_0\).  The commutator and
the ordered opinion contrast are continuous functions of \(\theta\) and are
nonzero at \(\theta_0\), so a further intersection preserves both properties.
\end{proof}

The statement establishes existence only. Quantifying the radius or
incorporating uncertainty in the graph, schedule, or flow after the kick would
require additional analysis.

\section{Reproduction}
\label{sec:reproduction}

The exact algebra and the independent winner certificates can be reproduced
from the arXiv submission source root with
\begin{verbatim}
python3 -B anc/verify_nonconjugate_separatrix_v0_1.py

python3 -B anc/verify_moderate_amplitude_parameter_box_v0_1.py

python3 -B anc/paper1_rebit_theory/verify_rebit_closure_v0_1.py \
  --with-winner
\end{verbatim}
The first command checks the graph identities, projector contractions, the
bound for finite pulses, and continuation from a small radius. The second
checks the moderate scalar parameter box. The unified third command checks the
CPTP channel algebra, both controls, the crossed network embedding, the
parameter lock, the moderate interval winner certificate, and the exact lift
from opinions to density states. These scripts make no LLM calls and require no
empirical calibration.

{\small
\bibliographystyle{unsrt}
\bibliography{references}
}